\documentclass[reprint, superscriptaddress, amsmath, amssymb, aps, prl,longbibliography]{revtex4-2}
\usepackage{color}
\usepackage{graphicx}
\usepackage{dcolumn}
\usepackage{bm}
\usepackage{braket}
\usepackage{enumerate}
\usepackage{amsmath,amssymb,amsfonts,ascmac,mathtools}
\usepackage{MnSymbol}

\usepackage{amsthm}
\theoremstyle{definition}
\newtheorem{thm}{Theorem}
\newtheorem{prop}[thm]{Proposition}
\newtheorem{lem}[thm]{Lemma}

\renewcommand\paragraph[1]{%
  \par\emph{#1---}\kern2pt\relax\ignorespaces}

\usepackage{hyperref}
\hypersetup{
  colorlinks=true,
  citecolor=magenta,
  linkcolor=blue,
  urlcolor=cyan,
  breaklinks=true
}

\DeclarePairedDelimiter{\abs}{\lvert}{\rvert}

\renewcommand{\ket}[1]{\mathinner{|{#1})}}

\newcommand{\NSstates}{\mathcal{P}_{\mathrm{NS}}}
\newcommand{\Ent}{\mathcal{V}_\mathrm{PR}}
\newcommand{\Sep}{\mathcal{V}_\mathrm{LR}}
\newcommand{\Observables}{\mathcal{O}}
\newcommand{\ext}{\mathrm{ext}}

\makeatletter
\AtBeginDocument{%
  \let\thebibliography\rtx@thebibliography
  \let\endthebibliography\endrtx@thebibliography
}
\makeatother
\makeatletter

\def\REV@fm@prefix{frontmatter.}
\def\REV@fm@dest#1{\REV@fm@prefix#1}

\def\frontmatter@footnotemark#1{%
  \leavevmode
  \ifhmode\edef\@x@sf{\the\spacefactor}\nobreak\fi
  \begingroup
    \hyper@linkstart{link}{\REV@fm@dest{#1}}%
      \csname c@\@mpfn\endcsname#1\relax
      \def\@thefnmark{\frontmatter@thefootnote}%
      \@makefnmark
    \hyper@linkend
  \endgroup
  \ifhmode\spacefactor\@x@sf\fi
  \relax
}

\def\present@bibnote#1#2{%
  \item[%
    \textsuperscript{%
      \normalfont
      \Hy@raisedlink{%
        \hyper@anchorstart{\REV@fm@dest{#1}}\hyper@anchorend
      }%
      \begingroup
        \csname c@\@mpfn\endcsname#1\relax
        \frontmatter@thefootnote
      \endgroup
    }%
  ]#2\par
}

\def\present@FM@footnote#1#2{%
  \begingroup
    \csname c@\@mpfn\endcsname#1\relax
    \def\@thefnmark{\frontmatter@thefootnote}%
    \Hy@raisedlink{%
      \hyper@anchorstart{\REV@fm@dest{#1}}\hyper@anchorend
    }%
    \frontmatter@footnotetext{#2}%
  \endgroup
}

\newcounter{REV@maketitle@count}

\def\frontmatter@maketitle{%
  \stepcounter{REV@maketitle@count}%
  \begingroup
    \ifnum\value{REV@maketitle@count}>1\relax
      \def\REV@fm@prefix{SMfrontmatter.}%
      \let\frontmatter@footnote@produce
          \frontmatter@footnote@produce@footnote
    \fi
    \let\REV@orig@label\label
    \def\label##1{%
      \edef\REV@tmp{##1}%
      \edef\REV@tgt{FirstPage}%
      \ifx\REV@tmp\REV@tgt
      \else
        \REV@orig@label{##1}%
      \fi
    }%
    \@author@finish
    \title@column\titleblock@produce
    \suppressfloats[t]%
    \let\abstract\@undefined
    \let\endabstract\@undefined
    \titlepage@sw{\vfil\clearpage}{}%
    \onecolumn@grid@setup
    \def\set@footnotewidth{\set@footnotewidth@one}%
  \endgroup
}

\makeatother

\makeatletter
\newcommand{\equalcontrib}{%
  \frontmatter@footnote{S.~U.~and A.~H. contributed equally to this work.}%
}
\makeatother

\begin{document}
\newcommand{\nameoftitle}{Entanglement Generation Beyond Quantum Theory: \\ 
From Product States to Popescu–Rohrlich Boxes}
\title{\nameoftitle}

\author{Shun Umekawa\equalcontrib}
\email{umeshun2003@g.ecc.u-tokyo.ac.jp}
\affiliation{Department of Physics, The University of Tokyo, 7-3-1 Hongo, Tokyo 113-0033, Japan}

\author{Akihiro Hokkyo\equalcontrib}
\email{hokkyo@cat.phys.s.u-tokyo.ac.jp}
\affiliation{Department of Physics, The University of Tokyo, 7-3-1 Hongo, Tokyo 113-0033, Japan}

\author{Hayato Arai}
\email{h.arai6626@gmail.com}
\affiliation{Communication Science Laboratories, NTT, Inc., 3-1 Morinosato Wakamiya, Atsugi, Kanagawa 243-0198, Japan}
\affiliation{Department of Basic Science, The University of Tokyo, 3-8-1 Komaba, Tokyo 153-8902, Japan}

\author{Kazuaki Takasan}
\email{takasan@ap.t.u-tokyo.ac.jp}
\affiliation{Department of Applied Physics, The University of Tokyo, 7-3-1 Hongo, Tokyo 113-8656, Japan}

\begin{abstract}
Entanglement generation is a fundamental dynamical capability in quantum information science and underpins many quantum advantages.
While quantum theory enables it through unitary dynamics, boxworld, a generalized probabilistic theory admitting Popescu--Rohrlich boxes with supraquantum correlations, 
has no reversible transformation capable of generating entanglement.
We show that this no-go picture changes fundamentally once reversibility is relaxed to pure-state preservation. 
We construct a pure-state-preserving transformation that maps 
every uncorrelated pure state
to a Popescu--Rohrlich box and completely classify all pure-state-preserving entangling
transformations in the simplest bipartite boxworld.
Our results provide the first explicit mechanism for generating beyond-quantum entanglement without introducing mixing and demonstrate
a physical distinction between reversibility and pure-state preservation that is obscured by the structure of quantum theory.

\end{abstract}

\maketitle

\paragraph{Introduction}
Entanglement is one of the central concepts of quantum theory \cite{Horodecki2009}.
In quantum information processing, it serves as an essential resource for a wide range of protocols, including quantum key distribution~\cite{Ekert1991} and quantum teleportation~\cite{Teleportation}.
From a foundational perspective, Bell's theorem~\cite{Bell} and, in particular, the Clauser--Horne--Shimony--Holt (CHSH) inequality~\cite{CHSH},
\begin{equation}
|\langle X X\rangle +a \langle X Y\rangle + b\langle Y X\rangle - ab\langle Y Y\rangle| \le 2,
\label{eq:CHSH}
\end{equation}
where \(a,b\in\{\pm1\}\),
quantitatively separate correlations compatible with local realism from
certain correlations attainable in quantum theory using entangled states, leading to experimental tests of the incompatibility between locality and realism~\cite{Aspect1982}.

Although entanglement enables correlations beyond local realism, quantum theory itself imposes a strict limit on their strength.
In quantum theory, the CHSH value, i.e., the left-hand side of Eq.~\eqref{eq:CHSH}, is bounded by 
\(2\sqrt{2}\)
~\cite{Tsirelson}.
By contrast, 
Khalfin and Tsirelson~\cite{Khalfin1985} and, independently, 
Rastall~\cite{Rastall1985} showed that the no-signaling principle, 
an operational consequence of relativistic causality,
permits probabilistic models attaining the algebraic maximum 
\(4\); 
these models were later highlighted by Popescu and Rohrlich~\cite{PRbox} and are now known as Popescu--Rohrlich (PR) boxes.
The empirical absence of such ``beyond-quantum'' correlations has motivated the search for additional physical principles that 
rule out such correlations, 
including information causality~\cite{InfoCausality}.

Most of these studies address a \emph{static} question:
how strong the correlations of a given state, or box, can be.
Here we instead focus on a \emph{dynamical} question in beyond-quantum theories:
can one generate nonlocal correlations, and hence entanglement, from an uncorrelated product state?
In quantum theory, when two qubits are initially in a product state, entanglement can be generated straightforwardly by a two-qubit unitary such as the CNOT gate.
In beyond-quantum theories, however, 
even the existence of nontrivial entangling dynamics remains unclear.
To the best of our knowledge, no explicit entangling transformation has been exhibited other than maps that merely prepare entangled output states, namely replacement maps and measure-and-prepare maps. 

\begin{figure}
    \centering
    \includegraphics[width=9cm]{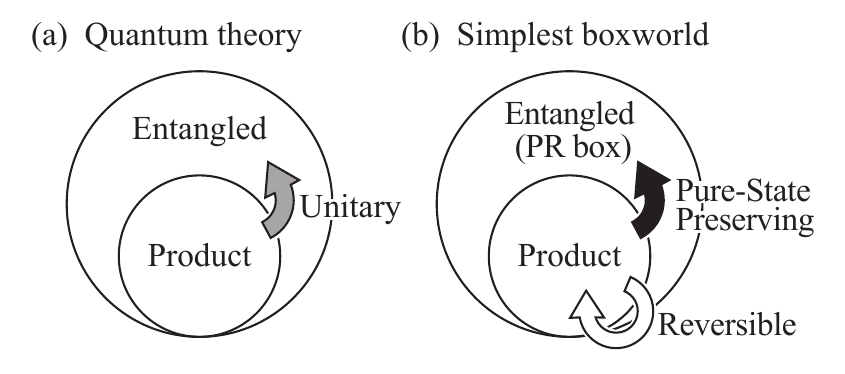}
    \caption{
    Entanglement generation in terms of two hallmark properties of unitary dynamics: reversibility and pure-state preservation. (a) In quantum theory, unitary dynamics can reversibly transform product pure states into entangled pure states. (b) Although the simplest boxworld, which corresponds to the setting for the CHSH inequality, admits no reversible entangling transformation, relaxing reversibility while retaining pure-state preservation allows product pure states to be transformed into entangled PR boxes.    
    }
    \label{fig:concept}
\end{figure}

Indeed, within the class of reversible transformations,
Gross \emph{et al.}~\cite{Gross2010} showed that
entanglement generation is impossible in boxworld~\cite{Barrett2007}, 
a generalized probabilistic theory 
that admits PR boxes,
concluding that ``there is no boxworld analogue of an entangling unitary''.
This provides a paradigmatic no-go result for reversible entanglement generation in a beyond-quantum theory, 
and it is conjectured that the same obstruction extends to maximally nonlocal theories more generally~\cite{Al-Safi2015}.
Thus, even when a theory admits correlations stronger than those of quantum theory, the dynamics capable of creating such correlations remain largely unexplored.

In this Letter, we show that this picture changes drastically once the assumption of reversibility is relaxed,
and explicitly construct a nontrivial \emph{pure-state-preserving} (PSP) transformation that generates a PR box from an uncorrelated product state (Fig.~\ref{fig:concept}).
PSP transformations map pure states to pure states, thereby capturing yet another hallmark of unitary dynamics distinct from reversibility.
In quantum theory, apart from trivial exceptions, the PSP property is essentially equivalent to reversibility~\cite{Hou2013}; 
in boxworld, however, our result shows that the two notions can be strictly separated from the perspective of entanglement generation.
Furthermore, we completely classify all such PSP entangling transformations 
in the simplest boxworld, showing that no examples exist beyond the family constructed here.

Our result sharpens the scope of the no-go theorem of Gross \emph{et al.}~\cite{Gross2010}: a distinctive feature of quantum theory may lie not in entanglement generation \textit{per se}, but in the ability to generate entanglement reversibly.
More broadly, our findings show that what is often subsumed under ``unitarity'' comprises at least two physically distinct principles—reversibility and pure-state preservation—which can come apart operationally beyond quantum theory.

\paragraph{No-signaling state space and PR boxes}
We consider the standard CHSH scenario.  Alice and Bob each choose one of two possible measurements, denoted by \(X\) and \(Y\), and each measurement has outcomes in \(\{\pm1\}\).  
We write \(\vec Z=(Z_A,Z_B)\in\{X,Y\}^2\) for the pair of measurement choices and
\(\vec m=(m_A,m_B)\in\{\pm1\}^2\) for the pair of outcomes. 
A fundamental object is a family of conditional probability distributions \(p=\{p(\vec m\mid \vec Z)\}_{\vec{m},\vec{Z}}\) .
This family is called a \textit{state}~\cite{Barrett2007,Gross2010}, 
because it completely determines the outcome probabilities
for all protocols involving local $X$ and $Y$ measurements, including one-way adaptive protocols.

The no-signaling condition~\cite{Eberhard1978,Rastall1985,PRbox,Barrett2005}
requires that each party's marginal distribution be independent of the other party's measurement choice:
\begin{equation}
\label{eq:no-signaling}
\begin{aligned}
\sum_{m_B=\pm1} p(m_A,m_B\mid Z_A,X)
&=
\sum_{m_B=\pm1} p(m_A,m_B\mid Z_A,Y),
\\
\sum_{m_A=\pm1} p(m_A,m_B\mid X,Z_B)
&=
\sum_{m_A=\pm1} p(m_A,m_B\mid Y,Z_B),
\end{aligned}
\end{equation}
which excludes superluminal communication. 
The state space is defined by the convex set of normalized nonnegative distributions satisfying Eq.~\eqref{eq:no-signaling}, which is denoted by \(\NSstates\).
This also corresponds to the simplest bipartite case of boxworld~\cite{Barrett2007}.
For $p\in\NSstates$, the marginal distributions $p_A(m_A\mid Z_A)$ and $p_B(m_B\mid Z_B)$
are defined by the expressions in the first and second lines of Eq.~\eqref{eq:no-signaling}, respectively.

For a no-signaling distribution \(p\), we define the expectation values as 
\begin{align}
A_Z(p)&\coloneqq\sum_{m_A\in\{\pm1\}}m_A\,p_A(m_A|Z), \\
B_Z(p)&\coloneqq\sum_{m_B\in\{\pm1\}}m_B\,p_B(m_B|Z), \\
C_{\vec Z}(p)&\coloneqq\langle Z_AZ_B\rangle_p
=\sum_{\vec m\in\{\pm1\}^2}m_Am_B\,p(\vec m|\vec Z),
\end{align}
where \(Z\in\{X,Y\}\) and \(\vec Z=(Z_A,Z_B)\in\{X,Y\}^2\).  
Due to the no-signaling condition, every $p\in\NSstates$ admits the following expression:
\begin{equation}\label{eq:expectation-coordinates}
p(\vec m|\vec Z)
=\frac14\{1+m_AA_{Z_A}(p)+m_BB_{Z_B}(p)+m_Am_BC_{\vec Z}(p)\}.
\end{equation}
Conversely, every set of eight numbers
\(\{A_Z,B_Z\}_{Z\in\{X,Y\}}\), 
\(\{C_{\vec{Z}}\}_{\vec{Z}\in\{X,Y\}^2}\) satisfying 
\begin{equation}\label{eq:positivity}
    \abs{A_{Z_A}-\delta B_{Z_B}}\le 1- \delta C_{\vec{Z}},\ \delta=\pm1
\end{equation}
determines $p\in\NSstates$ through Eq.~\eqref{eq:expectation-coordinates}. 
Therefore, 
the state space
can be identified with 
such a set of eight coordinates~\cite{Brunner2014}.

For \(a,b\in\{\pm1\}\), the CHSH values for \(p\) are defined as
\begin{equation}
S_{a,b}(p)=\braket{XX}_p+a\braket{XY}_p+b\braket{YX}_p-ab\braket{YY}_p.
\end{equation}
The CHSH inequalities~\cite{CHSH} state that \(\abs{S_{a,b}(p)}\le 2\) for all \(a,b\) if (and only if) \(p\) admits a local realistic model \cite{Fine1982}, 
in the sense that it can be written as a convex combination of product distributions of the form \(p_A(m_A\mid Z_A)\,p_B(m_B\mid Z_B)\).
Entangled quantum states can exhibit 
correlations that surpass the CHSH inequalities, 
but they obey \(\abs{S_{a,b}(p)}\le 2\sqrt2\)~\cite{Tsirelson}.  By contrast, no-signaling states can attain the algebraic maximum \(4\)~\cite{Khalfin1985,Rastall1985,PRbox}.
We define \(p_{\vec{n}}\in\NSstates\) as
\begin{equation}
p_{\vec{n}}(\vec{m}\mid \vec{Z})=\frac{1}{2}\delta_{m_Am_B,n_{\vec{Z}}},
\end{equation}
where \(\vec{n}=(n_{\vec{Z}})_{\vec{Z}\in \{X,Y\}^2}\) with \(n_{\vec{Z}}\in \{\pm1\}\).
This condition is equivalent to $A_Z=B_Z=0,C_{\vec{Z}}=n_{\vec{Z}}$. 
Thus, for \(\vec{n}\) satisfying \(\prod_{\vec{Z}}n_{\vec{Z}}=-1\), one can verify that the CHSH value attains \(|S_{a,b}(p_{\vec{n}})|=4\) for \(a=n_{X,X}/n_{X,Y},b=n_{X,X}/n_{Y,X}\in\{\pm1\}\).
Such \(p_{\vec{n}}\) are referred to as \textit{Popescu--Rohrlich (PR) boxes}~\cite{Barrett2005}.
This shows that the no-signaling condition alone does not single out the set of quantum correlations 
and allows nonlocal correlations stronger than those permitted by quantum theory.

\paragraph{Entanglement generation in the simplest bipartite boxworld}
In the following, we investigate \textit{entanglement generation} in the system 
with the state space \(\NSstates\).
A state is called entangled if it
does not admit any local realistic model.
We denote the set of states that admit a local realistic model as \(\mathcal{P}_{\mathrm{LR}}\),
and a state \(p\) is entangled if and only if \(p\in\NSstates\setminus\mathcal{P}_{\mathrm{LR}}\)
~\footnote{This is consistent with the terminology of ``entanglement'' in generalized probabilistic theories \cite{Barrett2007,Janotta2011,Janotta2012,Aubrun2021,Aubrun2022,Plavala2023}. \(\NSstates\) and \(\mathcal{P}_{\mathrm{LR}}\) 
correspond to the maximal and minimal tensor product state spaces of two square models (gbits), respectively.}.
We say that an allowed dynamical transformation generates entanglement, or is entangling, if it maps at least one non-entangled state to an entangled state.
Following Refs.~\cite{Gross2010,Al-Safi2015,Plavala2023}, 
we define allowed dynamical transformations as affine maps from the state space to itself~\footnote{This convention is a dynamical analogue of the so-called no-restriction hypothesis~\cite{Heinosaari2019}. 
More restrictive formulations instead treat the set of allowed transformations as an additional component of the theory, 
defining it as a subset of such maps~\cite{Hardy2001,Barrett2007}.}.
Thus, an affine map $\phi:\NSstates\to\NSstates$ is entangling if  \(\phi(\mathcal{P}_{\mathrm{LR}})\nsubseteq \mathcal{P}_{\mathrm{LR}}\).
Our main interest lies in nontrivial entanglement generation, 
beyond merely replacing the input state or, more generally, applying a measure-and-prepare map.

In quantum theory, interactions generate entanglement through unitary dynamics.
This motivates asking whether ``unitary-like'' dynamics on \(\NSstates\) can generate a PR box.
Previous works have shown no-go results: \textit{reversible} boxworld dynamics are nonentangling~\cite{Gross2010,Al-Safi2014,Al-Safi2015}.
We focus on yet another characterization of the unitary dynamics: \textit{pure-state preservation}.
Every reversible affine transformation is pure-state-preserving (PSP)~\cite{Hardy2001,Gross2010}, but the converse need not hold.
In quantum theory, however, these two properties are essentially equivalent; 
every completely positive trace-preserving map on a finite-dimensional quantum system that sends every pure state to a pure state is unitary, except for a constant map outputting a fixed pure state~\cite{Hou2013}.
We ask whether PSP dynamics can nevertheless generate entanglement in boxworld.

A state is pure if and only if it is an extreme point of the state space~\cite{Barrett2007}.
The pure states of \(\NSstates\) are the sixteen deterministic boxes and the eight PR boxes~\cite{Barrett2005}.
We denote these sets by \(\Sep\) and \(\Ent\), respectively, so that
\(
\ext(\NSstates)=\Sep\cup\Ent.
\)
Each \(p\in\Sep\) is characterized by
\(\vec n^A,\vec n^B\in\{\pm1\}^{\{X,Y\}}\) satisfying
\begin{equation}
A_Z=n^A_Z,\qquad
B_Z=n^B_Z,\qquad
C_{\vec Z}=n^A_{Z_A}n^B_{Z_B}.
\label{eq:characterization_sep_pure}
\end{equation}
We seek an entangling PSP transformation, equivalently an affine map \(\phi:\NSstates\to\NSstates\) satisfying
\begin{align}
\phi\bigl(\ext(\NSstates)\bigr)&\subseteq\ext(\NSstates),
\label{eq:PSP_definition}\\
\phi(\Sep)\cap\Ent&\neq\emptyset.
\label{eq:PSP_entangling_condition}
\end{align}
The equivalence follows from
\(\mathcal{P}_{\mathrm{LR}}=\operatorname{conv}(\Sep)\) and the affineness of \(\phi\).

\paragraph{Construction of entangling PSP maps}
We now construct PSP maps that generate PR-box entanglement.  
One can consider the following measure-and-prepare maps
\footnote{
A dynamical transformation \(\phi\) is said to be a measure-and-prepare map if it can be written as 
\begin{equation}
    \phi(p)=\sum_ie_i(p)p_i
\end{equation}
by using a family of states \(p_i\) and positive observables \(e_i\) satisfying \(\sum_ie_i=\mathbf{1}\).
}:
\begin{equation}
    \phi(p)=\frac{1+C_{\vec{Z}}(p)}{2}s_++\frac{1-C_{\vec{Z}}(p)}{2}s_-\label{eq:M-P}
\end{equation}
for a fixed $\vec{Z}\in\{X,Y\}^2$ and $s_\pm\in\ext(\NSstates)$.
These are trivial measure-and-prepare PSP maps, 
analogous to constant pure-state preparation channels in quantum theory.
Our first main result is the construction of
genuinely
nontrivial PSP maps.
\begin{thm}[PSP maps producing PR boxes]\label{thm:characterization}
Choose ordered pairs
\(\vec Z^{(+)}\neq\vec Z^{(-)}\in\{X,Y\}^2\),
\(\vec W^{(1)}\neq\vec W^{(2)}\in\{X,Y\}^2\), and a sign
\(\sigma\in\{\pm1\}\).  Define affine functions
\begin{equation}\label{eq:constructed-correlators}
r_{\vec Z}(p)\coloneqq
\begin{cases}
\sigma C_{\vec W^{(1)}}(p), & \vec Z=\vec Z^{(+)},\\
-\sigma C_{\vec W^{(1)}}(p), & \vec Z=\vec Z^{(-)},\\
\sigma C_{\vec W^{(2)}}(p), & \text{otherwise}.
\end{cases}
\end{equation}
Then the formula
\begin{equation}\label{eq:constructed-map}
\phi(p)(\vec m|\vec Z)\coloneqq\frac14\{1+m_Am_B r_{\vec Z}(p)\}
\end{equation}
defines an affine map \(\phi:\NSstates\to\NSstates\).  This map is PSP and, for
every pure state \(p\in\ext(\NSstates)\),
\begin{equation}
\phi(p)=p_{\vec r(p)}\in\Ent,
\end{equation}
where ${\vec r(p)}=(r_{\vec Z}(p))_{\vec{Z}\in\{X,Y\}^2}$.
In particular, \(\phi\) generates entanglement.  
\end{thm}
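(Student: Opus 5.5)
The plan is to work entirely in the eight expectation coordinates of Eq.~\eqref{eq:expectation-coordinates}. By the converse statement after that equation, $\phi(p)$ is a valid element of $\NSstates$ as soon as the coordinates $A_Z=B_Z=0$ and $C_{\vec Z}=r_{\vec Z}(p)$ satisfy the positivity condition~\eqref{eq:positivity}. The proof then has three steps: well-definedness and affineness, the action on pure states, and the consequences for PSP and entanglement generation.

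\emph{Step 1: well-definedness and affineness.} With vanishing marginals, condition~\eqref{eq:positivity} reduces to $0\le 1-\delta r_{\vec Z}(p)$ for $\delta=\pm1$, that is, to $\abs{r_{\vec Z}(p)}\le 1$. Each $r_{\vec Z}(p)$ equals $\pm C_{\vec W}(p)$ for some $\vec W$. Since $C_{\vec W}(p)$ is the expectation of a $\pm1$-valued product under the probability distribution $p(\cdot\mid\vec W)$, we have $\abs{C_{\vec W}(p)}\le 1$ for every $p\in\NSstates$. This bound holds for arbitrary, not only pure, states, so $\phi(p)\in\NSstates$. Normalization and no-signaling need no separate check, because they are built into the form of Eq.~\eqref{eq:expectation-coordinates}. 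Affineness follows because each $C_{\vec W}$ is a linear function of $p$, so each $r_{\vec Z}$ is affine and $\phi(p)(\vec m|\vec Z)$ depends affinely on $p$.

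\emph{Step 2: the action on pure states.} First I show that every $p\in\ext(\NSstates)=\Sep\cup\Ent$ satisfies $C_{\vec W}(p)\in\{\pm1\}$ for all $\vec W$.
\begin{itemize}
\item For $p\in\Sep$, Eq.~\eqref{eq:characterization_sep_pure} gives $C_{\vec W}=n^A_{W_A}n^B_{W_B}=\pm1$.
\item For a PR box $p_{\vec n}$, $C_{\vec W}=n_{\vec W}=\pm1$.
\end{itemize}
Hence $\vec r(p)\in\{\pm1\}^{\{X,Y\}^2}$. Comparing Eq.~\eqref{eq:constructed-map} with the definition of $p_{\vec n}$ (coordinates $A=B=0$, $C=\vec n$) gives $\phi(p)=p_{\vec r(p)}$.

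It remains to check the PR-box parity condition. The two indices $\vec Z^{(\pm)}$ contribute $\sigma C_{\vec W^{(1)}}$ and $-\sigma C_{\vec W^{(1)}}$. The two remaining indices, which exist because $\vec Z^{(+)}\neq\vec Z^{(-)}$, each contribute $\sigma C_{\vec W^{(2)}}$. Therefore
\begin{equation*}
\prod_{\vec Z} r_{\vec Z}(p) = -\sigma^4\, C_{\vec W^{(1)}}(p)^2\, C_{\vec W^{(2)}}(p)^2 = -1 .
\end{equation*}
So $p_{\vec r(p)}\in\Ent$.

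\emph{Step 3: PSP and entanglement generation.} Since $\Ent\subseteq\ext(\NSstates)$, Step 2 gives condition~\eqref{eq:PSP_definition}, so $\phi$ is PSP. Since $\Sep\neq\emptyset$ and $\phi(\Sep)\subseteq\Ent$, condition~\eqref{eq:PSP_entangling_condition} also holds, so $\phi$ is entangling.

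The argument is essentially computational. The only point needing care is the parity bookkeeping in Step 2: the sign flip between $\vec Z^{(+)}$ and $\vec Z^{(-)}$ must produce exactly one factor of $-1$, while squares of $\pm1$ values absorb everything else, regardless of which pure input is used. Notably, the hypothesis $\vec W^{(1)}\neq\vec W^{(2)}$ is not needed for this statement; the argument goes through even without it.
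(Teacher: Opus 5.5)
Your proposal is correct and follows the paper's proof essentially step for step. You bound $\abs{r_{\vec Z}}\le 1$ to get positivity, derive affineness from the $C_{\vec W}$, note that every correlator is a sign on pure inputs, and get the product $-1$. Your remark that $\vec W^{(1)}\neq\vec W^{(2)}$ is not needed for this statement is also right; that hypothesis only matters for the map not being measure-and-prepare.
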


\begin{proof}

Since \(-1\le r_{\vec Z}(p)\le1\) for every \(p\in\NSstates\) and every \(\vec Z\), \(\phi(p)\) defined by Eq.~\eqref{eq:constructed-map} satisfies the condition of Eq.~\eqref{eq:positivity}.
Hence, \(\phi\) is a well-defined map on \(\NSstates\).
The affineness of \(\phi\) follows from the affineness of \(C_{\vec W^{(i)}}\) (\(i=1,2\)).

If \(p\) is pure, then \(C_{\vec W^{(1)}}(p),C_{\vec W^{(2)}}(p)\in\{\pm1\}\).
Therefore each \(r_{\vec Z}(p)\) is a sign and
\begin{equation}
\prod_{\vec Z}r_{\vec Z}(p)
=\bigl(\sigma C_{\vec W^{(1)}}(p)\bigr)
 \bigl(-\sigma C_{\vec W^{(1)}}(p)\bigr)
 \bigl(\sigma C_{\vec W^{(2)}}(p)\bigr)^2
=-1.
\end{equation}
Thus \(\phi(p)\) is the PR box \(p_{\vec r(p)}\).  This proves the PSP property
and entanglement generation. 
\end{proof}
This map is not measure-and-prepare; see 
the remark following the proof of Lemma~\ref{lem:heisenberg-correlator} in End Matter.

This theorem provides 
the first explicit example of nontrivial dynamics generating beyond-quantum entanglement.
This stands in sharp contrast to the no-go theorem of Gross
\emph{et al.}~\cite{Gross2010}, 
which shows that no reversible dynamics can generate entanglement in boxworld.
The physical significance of our result is 
that the PR boxes are not merely present in the state space: they are dynamically accessible from uncorrelated states through transformations that are irreversible yet pure-state-preserving.

Since different choices of $\vec{Z}^{(\pm)},\vec{W}^{(i)},$ and $\sigma$ define different maps, 
we have \(4\cdot3\cdot4\cdot3\cdot2=288\) maps.
Furthermore, we 
completely identify the PSP maps on \(\NSstates\)
and find 3032 PSP maps, including disentangling PSP maps that are similar to the above entangling ones; see End Matter for the complete list.

\paragraph{Completeness of the construction} 
We now show that the maps constructed in Theorem~\ref{thm:characterization} exhaust all nontrivial entangling PSP maps. 
\begin{thm}[Completeness of entangling PSP maps]\label{thm:completeness} Every entangling PSP map on \(\NSstates\) 
that is not measure-and-prepare is one of the \(288\) maps constructed in Theorem~\ref{thm:characterization}. 
\end{thm}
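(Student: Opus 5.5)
An affine map $\phi:\NSstates\to\NSstates$ is fixed by its values on the 24 extreme points. It must, however, respect every affine relation among them. The plan is therefore to work in the eight expectation coordinates $(A_X,A_Y,B_X,B_Y,C_{XX},C_{XY},C_{YX},C_{YY})$. In these coordinates every output coordinate of $\phi(p)$ is an affine function $f(p)=c_0+\sum_Z\alpha_Z A_Z+\sum_Z\beta_Z B_Z+\sum_{\vec Z}\gamma_{\vec Z}C_{\vec Z}$. This is the Heisenberg-picture viewpoint, presumably the content of Lemma~\ref{lem:heisenberg-correlator} in the End Matter. The PSP condition says that, for each of the 24 pure inputs, the 8-tuple of output coordinates is one of the 24 vertices listed above. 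In particular, every output coordinate takes values in $\{0,\pm1\}$ on every pure state.

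\textbf{Step 1: classify the admissible scalar functions.} I would first classify the affine functions $f$ on $\NSstates$ with $f(\ext)\subseteq\{-1,0,1\}$. On deterministic boxes, $f$ is a multilinear polynomial in $n^A_X,n^A_Y,n^B_X,n^B_Y$ with no $A$-$A$ or $B$-$B$ products. On PR boxes only $c_0+\sum\gamma_{\vec Z}n_{\vec Z}$ survives. The values on the eight PR boxes force $c_0$ and the $\gamma$'s into a very small discrete set. The values on the sixteen deterministic boxes then constrain the $\alpha,\beta$. I expect the solutions to be $0$, $\pm1$, $\pm A_Z$, $\pm B_Z$, $\pm C_{\vec Z}$, together with a few combinations such as $\tfrac12(C_{\vec Z}\pm C_{\vec Z'}+\dots)$ that happen to be $\{0,\pm1\}$-valued. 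This is a finite linear-algebra/Fourier check.

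\textbf{Step 2: impose the entangling condition.} Some deterministic $p$ must map to a PR box, so on that input all $A,B$ outputs vanish and the correlators are signs. Using the Step 1 list, the $A$- and $B$-output functions must be drawn from constants and $\pm A_Z,\pm B_Z,\pm C_{\vec Z}$, or possibly $0$. Next comes joint consistency. For every pure input the output must satisfy Eq.~\eqref{eq:positivity} with equality patterns of a vertex, meaning either all $A,B$ outputs are zero (a PR box) or all are $\pm1$ with $C=AB$. The key case split is whether the marginal outputs are identically zero. If they are not, some marginal output is a nonconstant function like $A_Z$ or $C_{\vec Z}$. It vanishes on PR boxes only if it is of type $A_Z$/$B_Z$, and it never vanishes on deterministic boxes. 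Then the entangling input cannot exist unless the marginal is of type $C_{\vec Z}$, which is zero on no pure state, giving a contradiction. One case I expect to need care with: marginals that are $\{0,\pm1\}$ combinations vanishing exactly on a subset of deterministic boxes. I must show these force $\phi$ to factor through a single two-outcome observable, i.e.\ to be measure-and-prepare of the form Eq.~\eqref{eq:M-P}, or to contradict affineness.

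\textbf{Step 3: the marginals are identically zero.} In this case every pure input maps to a PR box. Each $r_{\vec Z}$ is then $\{\pm1\}$-valued on all 24 extreme points, and $\prod_{\vec Z}r_{\vec Z}=-1$ holds on all of them. From Step 1, the $\pm1$-valued functions are exactly $\pm1$ and $\pm C_{\vec W}$. The parity constraint on the 16 deterministic inputs, where $C_{\vec W}$ factorizes as $n^A n^B$, and on the PR inputs, where the $C_{\vec W}$ are independent up to one parity, forces the multiset of the four $r$'s to involve each $C_{\vec W}$ an even number of times. The sign product must also be $-1$. There are two ways this can happen. If only constants and a single $C_{\vec W}$ appear, $\phi$ is measure-and-prepare. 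Otherwise the pattern is exactly $\sigma C_{W^{(1)}},-\sigma C_{W^{(1)}},\sigma C_{W^{(2)}},\sigma C_{W^{(2)}}$. The pattern with four distinct $C$'s fails, because $\prod_{\vec Z}C_{\vec Z}$ is $+1$ on deterministic boxes but $-1$ on PR boxes. A remaining pattern is $C_{W^{(1)}}$ twice with opposite signs and constants $\pm1$ on the other two. This has $\phi$ depending only on $C_{W^{(1)}}$, so it is measure-and-prepare; I will check it is excluded by the hypothesis. Counting the choices recovers 288.

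\textbf{Main obstacle.} The hard part is Step 1 together with the mixed case in Step 2. The exhaustive classification of $\{0,\pm1\}$-valued affine functions, and the ruling out of hybrid outputs (deterministic on some inputs, PR on others), is essentially a finite enumeration. I would first try to organize it by symmetry: the local relabelings and party swap act transitively on the relevant vertices. If that proves too messy, I would fall back on an explicit computer-verified enumeration, consistent with the paper's count of 3032 PSP maps.
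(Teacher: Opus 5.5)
Your Step~3 is essentially the paper's closing parity argument. The part of Step~1 it actually uses (the $\pm1$-valued observables are exactly $\pm\mathbf1$ and $\pm C_{\vec W}$) is Lemma~\ref{lem:heisenberg-correlator}. The gap is Step~2, which carries the real content of the theorem and which you leave at ``I must show''.

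Its opening claim is already unjustified. The $\{0,\pm1\}$-valued observables include $\tfrac12(A_U\pm B_V)$ and $\pm\tfrac12(\mathbf1\pm C_{\vec W})$, which vanish on some deterministic boxes. So the output marginals are not restricted to constants, $\pm A_Z$, $\pm B_Z$, $\pm C_{\vec Z}$ or $\mathbf0$.

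More importantly, no classification of \emph{individual} $\{0,\pm1\}$-valued functions can dispose of the hybrid case, because hybrid marginals genuinely occur. The map of Eq.~\eqref{eq:M-P} with $s_+\in\Ent$ and $s_-\in\Sep$ is PSP and entangling, and it has $\overline A_X=\pm\tfrac12(\mathbf1-C_{\vec W})$. What has to be proved is therefore a \emph{joint} statement: such marginals can only coexist with output correlators built from a single input correlator. Your plan has no mechanism tying $\overline A_U,\overline B_V$ to the $\overline C_{\vec Z}$. A computer enumeration could settle the theorem, but that would be a different, computational proof rather than a completion of this one.

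The missing idea is the coupling supplied by positivity of the output. Suppose $\overline C_{U,V}=\delta C_{\vec W}$. Then $\abs{\overline A_U\mp\overline B_V}\le 1\mp\delta C_{\vec W}$ forces $\overline A_U\mp\overline B_V$ to vanish on the face $\{C_{\vec W}=\pm\delta\}$ of the input space. The observables vanishing on $\{C_{\vec W}=\delta\}$ are spanned by $\mathbf1-\delta C_{\vec W}$ and $A_{W_A}-\delta B_{W_B}$, so this gives $\overline A_U,\overline B_V\in\mathcal V_{\vec W}=\operatorname{span}\{\mathbf1,A_{W_A},B_{W_B},C_{\vec W}\}$.

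The productive case split is then not ``marginals identically zero or not'' but ``how many distinct input correlators occur among the $\overline C_{\vec Z}$''.
\begin{itemize}
\item If at least two occur, either two entries sharing an output setting carry distinct correlators, or the distinct correlators sit on a diagonal. In the diagonal case a constant off-diagonal entry $\overline C_{X,Y}=\delta\mathbf1$ yields $\overline A_X=\delta\overline B_Y$. In both cases some output marginal lies in $\mathcal V_{\vec W}\cap\mathcal V_{\vec W'}=\operatorname{span}\{\mathbf1,N\}$, with $N$ a single input marginal or absent. There $\{0,\pm1\}$-valuedness leaves only $\mathbf0,\pm\mathbf1,\pm N$, and all but $\mathbf0$ are nonzero on every deterministic input. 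So the deterministic input sent to a PR box forces this marginal to be $\mathbf0$. Hence every pure output is a PR box, and your Step~3 takes over.
\item If at most one correlator $C_{\vec W}$ occurs, apply the same coupling and evaluate on PR and deterministic inputs sharing the same value of $C_{\vec W}$. This confines all output marginals to $\operatorname{span}\{\mathbf1,C_{\vec W}\}$. So $\phi$ factors through the binary measurement of $C_{\vec W}$ and is measure-and-prepare.
\end{itemize}
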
 
The key to the proof is to work in the Heisenberg picture. 
Let \(\Observables\) denote the real vector space of affine functions \(\NSstates\to\mathbb R\) and call its elements observables. 
\(\Observables\) is generated by the expectation values $A_Z,B_Z,C_{\vec{Z}}$ and the constant observable \(\mathbf1:p\mapsto1\).
We define the Heisenberg picture \(\phi^\dagger\) of a dynamical transformation \(\phi\) as \(\phi^\dagger: O\mapsto O\circ\phi\) for \(O\in\Observables\).
Thus, instead of  focusing on states under \(\phi\), we focus on the output correlators as affine observables on the input state space.

Pure-state preservation imposes a strong restriction on these observables. 
\begin{lem}\label{lem:heisenberg-correlator} 
For every PSP map \(\phi\) and every output setting \(\vec Z\in\{X,Y\}^2\), 
\begin{equation}\label{eq:heisenberg-correlator-classification} 
\phi^\dagger C_{\vec Z} \in \{\pm\mathbf1\} \cup \{\pm C_{\vec W}:\vec W\in\{X,Y\}^2\}.
\end{equation} 
\end{lem}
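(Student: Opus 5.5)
The plan is to reduce the claim to a finite combinatorial problem about affine functions that take values $\pm1$ on the $24$ pure states. Set $O\coloneqq\phi^\dagger C_{\vec Z}=C_{\vec Z}\circ\phi\in\Observables$. For every pure $p$, $\phi(p)$ is pure by Eq.~\eqref{eq:PSP_definition}, so $\phi(p)\in\Sep\cup\Ent$. By Eq.~\eqref{eq:characterization_sep_pure} and the definition of the PR boxes, every correlator of a pure state is $\pm1$. Hence $O(p)\in\{\pm1\}$ for all $p\in\ext(\NSstates)$. I would expand
\begin{equation*}
O=c\,\mathbf1+\sum_Z a_Z A_Z+\sum_Z b_Z B_Z+\sum_{\vec W}c_{\vec W}C_{\vec W}
\end{equation*}
and show that the only solutions of this condition are the eight observables in Eq.~\eqref{eq:heisenberg-correlator-classification}. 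The converse direction is not needed.

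\emph{Step 1: reduce to the correlator coordinates.} Introduce the auxiliary affine function $g(\vec n)\coloneqq c+\sum_{\vec W}c_{\vec W}n_{\vec W}$ on $\{\pm1\}^4$. On odd-parity $\vec n$ (those with $\prod n_{\vec W}=-1$), $g(\vec n)=O(p_{\vec n})\in\{\pm1\}$, since PR boxes have $A=B=0$. For even-parity $\vec n$, there are exactly two deterministic boxes realizing these correlators, namely $(\vec n^A,\vec n^B)$ and $(-\vec n^A,-\vec n^B)$. The linear terms in $a,b$ cancel in their average, so $g(\vec n)=\tfrac12\bigl(O(p)+O(p')\bigr)\in\{-1,0,1\}$. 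Thus $g$ is $\pm1$ on the odd coset and lies in $\{-1,0,1\}$ on the even coset.

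\emph{Step 2: classify $g$.} Flipping a single coordinate $n_{\vec W}$ starting from an odd point gives $2c_{\vec W}n_{\vec W}=g(\vec n)-g(\vec n')\in\{0,\pm1,\pm2\}$, so $c_{\vec W}\in\{0,\pm\tfrac12,\pm1\}$. On the odd coset, the constant and the four coordinate characters are mutually orthogonal, since the product of any two distinct coordinates is a nonconstant character there. Hence Parseval gives $c^2+\sum_{\vec W}c_{\vec W}^2=1$. This leaves two cases: either exactly one of $c,c_{\vec W}$ equals $\pm1$ and the rest vanish, or exactly four of the five coefficients equal $\pm\tfrac12$. I expect excluding the second case to be the main obstacle. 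I would handle it by a direct finite check. In that case $g$ is a sum of four $\pm\tfrac12$ terms, so it equals $\pm1$ exactly when three of the signs agree. Requiring this on all eight odd points, while keeping $|g|\le1$ on the even points, should produce a contradiction: either some odd point has all four terms equal, giving $\pm2$, or some even point gives $\pm2$. If the parity bookkeeping is not transparent, I would fall back on explicit enumeration of the $2^4\cdot 5$ sign and support patterns.

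\emph{Step 3: kill the local terms.} Once $g\in\{\pm1\}\cup\{\pm n_{\vec W}\}$, $g$ takes values $\pm1$ on even points as well. On every deterministic box, $O(p)=g(\vec n)+L(\vec n^A,\vec n^B)$ with $L\coloneqq\sum_Z a_Zn^A_Z+\sum_Z b_Zn^B_Z$. Both $O(p)$ and $g$ are $\pm1$ there, so $L\in\{0,\pm2\}$. Moreover $L$ changes sign under $(\vec n^A,\vec n^B)\mapsto(-\vec n^A,-\vec n^B)$, which leaves $\vec n$ and hence $g$ fixed. This forces $L=0$ at every deterministic box: otherwise $O$ would equal $g\pm2$ at one of the two boxes, which is not $\pm1$. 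The deterministic boxes span the $A,B$ coordinates, so $a_Z=b_Z=0$. Therefore $O=\pm\mathbf1$ or $O=\pm C_{\vec W}$, which is Eq.~\eqref{eq:heisenberg-correlator-classification}.
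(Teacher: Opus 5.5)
Your argument is correct, but it runs in the opposite order from the paper's proof.

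\textbf{How the paper proceeds.} The paper removes the local terms first. It averages \((O-\mathbf1)(O+\mathbf1)=0\) over the sixteen deterministic vertices, where \(\mathbf1,A_U,B_V,C_{\vec W}\) are orthonormal, and separately over the eight PR vertices. This gives two Parseval identities, and their difference is \(\sum_Z a_Z^2+\sum_Z b_Z^2=0\) in your notation. Once \(O\) depends only on the correlators, it is \(\pm1\) on both parity cosets. A single sign flip then gives \(c_{\vec W}\in\{0,\pm1\}\) directly, so no half-integer case ever appears.

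\textbf{How yours differs.} You project out the local part by pairing each deterministic box with its global flip. That leaves you with only \(g\in\{-1,0,1\}\) on the even coset, and this weaker information is exactly what creates the \(\pm\tfrac12\) case. You then remove the local terms afterwards, using the oddness of \(L\) under the global flip. The paper's route is shorter. Yours needs only the odd-coset Parseval identity plus a symmetry argument, and it makes the decoupling of the correlator part from the local part explicit.

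\textbf{A small omission in Step 2.} The flip argument constrains only the \(c_{\vec W}\), not \(c\). Your two-case split therefore tacitly assumes \(c\in\{0,\pm\tfrac12,\pm1\}\). Without that, for example, \(c=\pm\sqrt3/2\) with a single \(c_{\vec W}=\pm\tfrac12\) satisfies Parseval. The missing step follows from \(c=g(\vec n)-\sum_{\vec W}c_{\vec W}n_{\vec W}\in\tfrac12\mathbb Z\) at any odd point.

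\textbf{You can skip the case analysis.} Step 1 already gives \(|g|\le1\) on all of \(\{\pm1\}^4\), and \(\max_{\vec n}|g(\vec n)|=|c|+\sum_{\vec W}|c_{\vec W}|\). Hence
\(c^2+\sum_{\vec W}c_{\vec W}^2=1\ge|c|+\sum_{\vec W}|c_{\vec W}|\). This forces every coefficient into \(\{0,\pm1\}\) with exactly one nonzero. It also shows why your anticipated contradiction in the \(\pm\tfrac12\) case is simply \(\max|g|=2\).
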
 
In other words, each output correlation is either fixed to a constant sign or is equal, 
up to a sign, to one of the four input correlators. 

If at most one correlator \(C_{\vec W}\) appears among the observables \(\phi^\dagger C_{\vec Z}\), 
the output is determined by the single observable \(C_{\vec W}\). 
In this case, the map is measure-and-prepare.
Otherwise, at least two distinct correlators appear.
In this case, 
the entangling PSP condition forces every pure input state to be mapped to a PR box. 
Therefore 
\begin{equation}\label{eq:product-minus-one} 
\prod_{\vec Z\in\{X,Y\}^2} \phi^\dagger C_{\vec Z}(p) = -1 \qquad \text{for every }p\in\ext(\NSstates). 
\end{equation}
Together with Lemma~\ref{lem:heisenberg-correlator}, 
this implies that each input correlator must occur with even multiplicity in the product in Eq.~\eqref{eq:product-minus-one}, which 
is exactly
the correlator pattern of Theorem~\ref{thm:characterization}.
A complete proof is given in End Matter;
see also Supplemental Material~\cite{SM} for technical details,
where we further provide a complete classification of all PSP maps, including the nonentangling ones.

\paragraph{Postselected circuit realization}
Postselection has previously been used to reproduce exact PR-box statistics using quantum or even classical resources~\cite{Marcovitch2007,Plavala2020,Bacciagaluppi2021}.
Here, we show that it can implement not merely a fixed PR-box distribution, 
but the entire entangling map \(\phi\): 
for every input state \(p\), the same circuit
realizes \(\phi(p)\) upon success, with probability \(1/2\) independently of \(p\)~\footnote{
Here we distinguish the application of the map from the subsequent measurement of the output state. Once the map has been applied, the output state can be measured locally, without either party revealing its measurement setting to the other. This is distinct from measurement-dependent wirings~\cite{Barrett2007}, in which the two local settings must be jointly available during the transformation.
}.

First, we encode each no-signaling state \(p\in\NSstates\) as a joint distribution of classical random bits as follows. 
Let \(\hat Z_A\) and \(\hat Z_B\) be independent uniformly random bits, where \(0\) and \(1\) are identified with the measurement settings \(X\) and \(Y\), respectively.
Conditioned on \(\hat Z_A=Z_A\) and \(\hat Z_B=Z_B\), let \(\hat m_A\) and \(\hat m_B\) be sampled according to \(p(m_A,m_B|Z_A,Z_B)\). 
The resulting joint distribution is
\begin{align}
&\widehat p(\hat m_A=m_A,\hat m_B=m_B,\hat Z_A=Z_A,\hat Z_B=Z_B)\nonumber\\
&\coloneqq\frac14 p(m_A,m_B|Z_A,Z_B).  \label{eq:encoding}
\end{align}
These random bits simulate the original state \(p\) as follows. 
For chosen settings \(\vec Z=(Z_A,Z_B)\), each party \(C\in\{A,B\}\) reads out \((\hat m_C,\hat Z_C)\) and retains \(\hat m_C\) only if \(\hat Z_C=Z_C\). The run is thus retained with probability \(1/4\), and, conditioned on retention, \((\hat m_A,\hat m_B)\) is distributed according to the original \(p\).

Next, we construct a classical logical circuit that simulates 
the action of the entangling map \(\phi\) at the level of encoded random bits as
\begin{equation}
\raisebox{-4.5em}
{\includegraphics[width=0.8\linewidth]{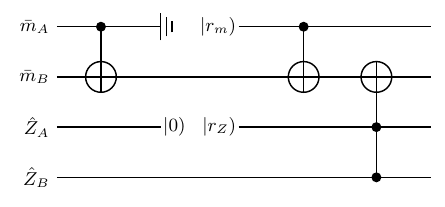}},
\end{equation}
where time flows from left to right.
This circuit acts on samples from \(\widehat p\) with the convention $\hat{m}_A=(-1)^{\bar{m}_A}$ and $\hat{m}_B=(-1)^{\bar{m}_B}$.
Here, 
two- and three-bit gates are the controlled-not and Toffoli gates, respectively;
the three-line termination symbol denotes the discarding of a bit;
$\ket{r_m}-$ and $\ket{r_Z}-$ denote the preparation of independent uniformly random bits;
and $-\ket{0}$ denotes postselection on $0$, meaning that an output of $0$ is retained, whereas an output of $1$ is discarded.

The resulting unnormalized output distribution \(\hat{p}'\) is
\begin{align}
&\hat{p}'(m_A,m_B,Z_A,Z_B)\nonumber\\
&=
\frac{1}{4}\sum_{\tau\in\{\pm1\}}
\hat{p}\bigl(
\tau,
(-1)^{Z_AZ_B}\tau m_Am_B,
X,
Z_B
\bigr)
\nonumber\\
&=
\frac{1}{2}
\widehat{\phi(p)}(m_A,m_B,Z_A,Z_B),
\label{eq:implementation_PSP}
\end{align}
where $\phi$ is the entangling PSP map defined in Theorem~\ref{thm:characterization}, with
$\vec{Z}^{(+)}=\vec{W}^{(1)}=(X,Y)$,
$\vec{Z}^{(-)}=(Y,Y)$,
$\vec{W}^{(2)}=(X,X)$,
and $\sigma=+1$.
Therefore, Eq.~\eqref{eq:implementation_PSP} shows that, conditioned on successful postselection, the normalized output distribution is $\widehat{\phi(p)}$; hence, the circuit implements the entangling PSP map $\phi$ at the level of the encoding in Eq.~\eqref{eq:encoding}, 
with a success probability of $1/2$.
Similar constructions apply to general choices of
$\vec{Z}^{(\pm)}$, $\vec{W}^{(1,2)}$, and $\sigma\in\{\pm1\}$.

\paragraph{Conclusions}
We have provided, to our knowledge, the first explicit dynamics that generates beyond-quantum entanglement. The constructed pure-state-preserving transformations on 
\(\NSstates\) map deterministic product states to PR boxes and, more strongly, send every pure input to a PR box. We identified 288 non-measure-and-prepare entangling transformations and proved that they exhaust this class. These results show that 
\(\NSstates\) is not dynamically sterile: its maximally nonlocal states can be reached from uncorrelated states without introducing mixing. The transformations are necessarily irreversible, so our construction sharpens, rather than contradicts, the known no-go theorem for reversible dynamics. Determining which physical requirements exclude such entangling transformations may help clarify why nature realizes quantum entanglement rather than stronger entanglement in no-signaling theories.

Several directions remain open.
First, it would be important to clarify the information-processing power enabled by the entangling maps constructed here.
At the level of static resources, freely available PR boxes are known to trivialize two-party communication complexity~\cite{vanDam1999,Brassard2006}.
Our result raises the corresponding dynamical question: how does the situation change when PR boxes must be generated from uncorrelated states rather than supplied as static resources, and what resource cost should be assigned to the transformations that generate them?
Closely related is the operational implementation of these transformations. Although we have presented a postselection protocol, the optimal implementation remains unknown.

Finally, although the triviality of reversible dynamics extends to multipartite boxworld~\cite{Gross2010,Al-Safi2014}
and other systems~\cite{Masanes2014,Al-Safi2015}, 
PSP dynamics may exhibit qualitatively new behavior in larger systems.
Already in tripartite boxworld, the state space contains genuinely multipartite extremal boxes with no bipartite counterpart~\cite{Pironio2011}.
It remains to determine which such boxes are reachable from product states by PSP transformations and whether the rigidity found here in the bipartite setting persists in general multipartite systems.

\paragraph{Acknowledgments}
S.U.~and A.H.~were supported by the FoPM WINGS Program at the University of Tokyo.
A.H.~was also supported by KAKENHI
Grant No.~JP25KJ0833 from the Japan Society for the
Promotion of Science (JSPS) and by the JSR Fellowship at the University of Tokyo.
H.A.~was supported by JSPS KAKENHI
Grant Nos.~25KJ0043 and 26K17031.
K.T. was supported by JST PRESTO (Grants Nos. JPMJPR2256 and JPMJPR2596) and JSPS KAKENHI (Grants Nos. JP23K17664, JP25K17312,  JP26H00385, and JP26H00382).

\bibliography{applied-GPT_ref,suppl}

\clearpage
\section*{End Matter}
For simplicity, we denote \(\overline A_Z\coloneqq\phi^\dagger A_Z\), \(\overline B_Z\coloneqq\phi^\dagger B_Z\), and \(\overline C_{\vec Z}\coloneqq\phi^\dagger C_{\vec Z}\) for \(Z\in\{X,Y\}\) and \(\vec Z\in\{X,Y\}^2\).
\begin{proof}[Proof of Lemma~\ref{lem:heisenberg-correlator}]
It suffices to show that the only observables \(O\in \Observables\) taking values \(\pm1\) on all pure states are either \(\pm\mathbf1\) or \(\pm C_{\vec W}\) (\(\vec W\in\{X,Y\}^2\)).
Indeed, since \(\phi\) is PSP, every observable of the form \(\phi^\dagger C_{\vec Z}\) must take values \(\pm1\) on all pure states, and the claim follows.
We write \(O\) in expectation coordinates:
\begin{equation}\label{eq:EM-correlator-expansion}
O
=
c\mathbf1
+
\sum_{U\in\{X,Y\}}\alpha_U A_U
+
\sum_{V\in\{X,Y\}}\beta_V B_V
+
\sum_{\vec W\in\{X,Y\}^2}
\gamma_{\vec W}C_{\vec W}
\end{equation}
with real $c,\alpha_U,\beta_V,$ and $\gamma_{\vec W}$. 

By assumption, \(O(p)\in\{\pm 1\}\) for every pure state \(p\), 
and hence
\begin{equation}\label{eq:EM-square-one}
\bigl(
O(p)-\mathbf1
\bigr)
\bigl(
O(p)+\mathbf1
\bigr)
=
0
\end{equation}
on every pure input state \(p\).

Let
\(\langle\cdot\rangle_{\Sep}\)
denote the uniform average over the sixteen deterministic vertices.
The functions
\(
\mathbf1,\ 
A_U,\ 
B_V,\ 
C_{\vec W}
\)
are mutually orthogonal under this average,
because $A_U$ and $B_V$ are independent random signs and 
$C_{\vec W}=A_{W_A}B_{W_B}$ on $\Sep$.
Averaging Eq.~\eqref{eq:EM-square-one} over
\(\Sep\) gives
\begin{equation}\label{eq:EM-deterministic-average}
0
=
c^2
+
\sum_U\alpha_U^2
+
\sum_V\beta_V^2
+
\sum_{\vec W}\gamma_{\vec W}^2
-
1.
\end{equation}

Let
\(\langle\cdot\rangle_{\Ent}\)
denote the uniform average over the eight PR vertices.
All local marginals vanish on these vertices, while the four correlators are
mutually orthogonal, 
because any two correlators are independent random signs.
Averaging Eq.~\eqref{eq:EM-square-one} over
\(\Ent\) gives
\begin{equation}\label{eq:EM-PR-average}
0
=
c^2
+
\sum_{\vec W}\gamma_{\vec W}^2
-
1.
\end{equation}
Subtracting
Eq.~\eqref{eq:EM-PR-average}
from
Eq.~\eqref{eq:EM-deterministic-average},
we obtain
\begin{equation}
\sum_U\alpha_U^2+\sum_V\beta_V^2=0.
\end{equation}
Hence all local-marginal coefficients vanish:
\begin{equation}
O
=
c\mathbf1
+
\sum_{\vec W}\gamma_{\vec W}C_{\vec W}.
\end{equation}

For any $\vec W$ and any pure state \(p\), there exists a pure state \(p'\) such that \((C_{\vec Z}(p'))_{\vec Z}\) differs from \((C_{\vec Z}(p))_{\vec Z}\) only in the sign of \(C_{\vec W}\).
For such a pair of states,
\[
\abs{O(p)-O(p')}
=
2\abs{\gamma_{\vec W}}.
\]
Since both \(O(p)\) and \(O(p')\) take values in \(\{\pm1\}\), it follows that \(\gamma_{\vec W}\in\{0,\pm1\}\).

Moreover, if there were two or more \(\vec W\) satisfying \(\gamma_{\vec W}=\pm1\), then one could choose pure states \(p\) and \(p'\) for which the signs of 
two corresponding correlators are flipped, 
yielding \(\abs{O(p)-O(p')}=4\).
This is impossible because \(O(p),O(p')\in\{\pm1\}\).
Therefore, there exists at most one \(\vec W\) such that \(\gamma_{\vec W}=\pm1\).

If all of \(\gamma_{\vec W}\) vanish, Eq.~\eqref{eq:EM-PR-average} gives
\(c=\pm1\).
If \(\gamma_{\vec W}=\pm1\) for exactly one \(\vec W\), then Eq.~\eqref{eq:EM-PR-average} implies \(c=0\).
Therefore
\begin{equation}
O
\in
\{\pm\mathbf1\}
\cup
\{\pm C_{\vec W}:\vec W\in\{X,Y\}^2\},
\end{equation}
as claimed.
\end{proof}

The above proof implies that the only effects taking values in \(\{0,1\}\) on all pure states are
\(\mathbf0\), \(\mathbf1\), and \((\mathbf1\pm C_{\vec Z})/2\).
It follows that the only PSP measure-and-prepare maps are those of Eq.~\eqref{eq:M-P}.
Indeed, without loss of generality, we may assume that distinct measurement outcomes correspond to distinct prepared pure states.
Then, if a measurement effect took a value other than \(0\) or \(1\) on some pure state, the corresponding output would be a nontrivial convex combination of distinct prepared pure states, contradicting the PSP property.
Therefore, the maps constructed in Theorem~\ref{thm:characterization} are not measure-and-prepare.

\begin{proof}[Proof of Theorem~\ref{thm:completeness}]
For an input setting pair \(\vec W\), define
\begin{equation}\label{eq:EM-local-subspace}
\mathcal V_{\vec W}
\coloneqq
\operatorname{span}
\{\mathbf1,A_{W_A},B_{W_B},C_{\vec W}\}.
\end{equation}

We use the following  three facts:
First, if at most one distinct input correlator occurs among the nonconstant
observables
\(
\overline C_{\vec Z}
\),
then the 
PSP map $\phi$
is measure-and-prepare.
Second, if an output correlator is represented by a signed input correlator,
\begin{equation}\label{eq:EM-marginal-support-assumption}
\overline C_{\vec Z}
=
\pm C_{\vec W},
\end{equation}
then the corresponding local marginals satisfy
\begin{equation}\label{eq:EM-marginal-support}
\overline A_{Z_A},\overline B_{Z_B}
\in
\mathcal V_{\vec W}.
\end{equation}
Third, if an output local marginal \(\overline M\) with \(M\in\{A_X,A_Y,B_X,B_Y\}\) satisfies
\begin{equation}
\overline M\in\mathcal{V}_{\vec W}\cap \mathcal{V}_{\vec W'}\quad (\vec W\neq \vec W'),
\end{equation}
then 
\begin{equation}
\overline M\in\{\mathbf0,\pm\mathbf1 ,\pm N\},
\end{equation}
where \(N=A_{W_A}\) if \(W_A=W_A'\),  \(N=B_{W_B}\) if \(W_B=W_B'\) 
and \(N=\bm0\) otherwise.
Here, \(\bm0=0\times\bm 1\) is the constant observable \(p\mapsto 0\).
See Supplemental Material~\cite{SM} for the proofs of these facts.

Let \(\phi\) be an entangling PSP map that is not
measure-and-prepare.
By the first fact, 
more than one distinct input correlator
occurs among the four observables
\(
\overline C_{\vec Z}
\).

Arrange the four observables according to their output settings:
\begin{equation}\label{eq:EM-output-square}
\begin{pmatrix}
\overline C_{X,X} & \overline C_{X,Y}\\
\overline C_{Y,X} & \overline C_{Y,Y}
\end{pmatrix}.
\end{equation}

Suppose first that two neighboring entries involve distinct input
correlators.
Without loss of generality, write
\begin{equation}\label{eq:EM-neighboring-correlators}
\overline C_{X,X}
=
\pm C_{\vec W^{(1)}},
\quad
\overline C_{X,Y}
=
\pm C_{\vec W^{(2)}},
\quad
\vec W^{(1)}\neq\vec W^{(2)}.
\end{equation}
The two output correlators share Alice's setting 
\(X\).
From the second fact,
\begin{equation}\label{eq:EM-neighboring-intersection}
\overline A_X
\in
\mathcal V_{\vec W^{(1)}}
\cap
\mathcal V_{\vec W^{(2)}}.
\end{equation}

Since \(\phi\) is entangling, 
some \(p_0\in\Sep\) is mapped to \(\phi(p_0)\in\Ent\).
At this input,
\begin{equation}
\overline A_{X}(p_0)=0,
\qquad
N(p_0)\in\{\pm1\},
\end{equation}
for every \(N\in\{A_X,A_Y,B_X,B_Y\}\).
Among the choices allowed by the third fact, this is only possible for
\begin{equation}\label{eq:EM-vanishing-local-marginal}
\overline A_{X}=\textbf0.
\end{equation}
Every pure output state therefore has a vanishing local marginal.
A deterministic pure state cannot have such a marginal.
Hence every pure input is mapped to a PR box.

It remains to exclude the case in which no neighboring entries of
Eq.~\eqref{eq:EM-output-square}
involve distinct input correlators.
Since more than one distinct input correlator occurs, two distinct
correlators must then appear at opposite corners.
After relabelling the output settings, write
\begin{equation}\label{eq:EM-diagonal-correlators}
\overline C_{X,X}
=
\pm C_{\vec W^{(1)}},
\quad
\overline C_{Y,Y}
=
\pm C_{\vec W^{(2)}},
\quad
\vec W^{(1)}\neq\vec W^{(2)}.
\end{equation}
The two remaining entries must be constant signs:
\begin{equation}\label{eq:EM-diagonal-constants}
\overline C_{X,Y}=\delta\mathbf1,
\quad
\overline C_{Y,X}=\delta'\mathbf1
\quad
(\delta,\delta'\in\{\pm1\}).
\end{equation}

The first identity in Eq.~\eqref{eq:EM-diagonal-constants} means that every output state $q=\phi(p)$ satisfies
\(
C_{X,Y}(q)=\delta
\).
Since 
Eq.~\eqref{eq:positivity} implies 
the relation \(A_X(q)=\delta B_Y(q)\),
we have
\begin{equation}\label{eq:EM-bridge-marginal}
\overline A_X
=
\delta\,\overline B_Y.
\end{equation}
Using
Eq.~\eqref{eq:EM-marginal-support}
for the two nonconstant observables in
Eq.~\eqref{eq:EM-diagonal-correlators},
we obtain \(\overline A_X\in\mathcal V_{\vec W^{(1)}}\), \(\overline B_Y\in\mathcal V_{\vec W^{(2)}}\).
Together with
Eq.~\eqref{eq:EM-bridge-marginal},
this gives
\begin{equation}
\overline A_X
\in
\mathcal V_{\vec W^{(1)}}
\cap
\mathcal V_{\vec W^{(2)}}.
\end{equation}
The argument leading to
Eq.~\eqref{eq:EM-vanishing-local-marginal}
applies again.
Hence every pure input is mapped to a PR box.

We have proved that every pure input state is mapped to a PR box.
Consequently, we have $\overline A_U=\overline B_V=\bm0$ and
\begin{equation}\label{eq:EM-product-minus-one}
\prod_{\vec Z\in\{X,Y\}^2}
\overline C_{\vec Z}(p)
=
-1
\qquad
\text{for every }p\in\ext(\NSstates).
\end{equation}

By Lemma~\ref{lem:heisenberg-correlator}, the left-hand side of
Eq.~\eqref{eq:EM-product-minus-one}
is a signed monomial in the input correlators.
Since the pure correlation vectors cover the full sign cube \(\{\pm1\}^4\), 
this monomial
can be constant only if every input correlator occurs with even
multiplicity.
 
Since more
than one distinct input correlator occurs, 
exactly two distinct input correlators occur, each exactly twice, and
no constant factor occurs.
We note that the second case considered above is therefore ruled out.
To realize the  overall sign  \(-1\) in
Eq.~\eqref{eq:EM-product-minus-one},
one of the two correlators  
must occur with opposite signs, 
whereas the other 
must occur with the same sign twice.

Let
\(C_{\vec W^{(1)}}\)
denote the correlator occurring with opposite signs, and let
\(C_{\vec W^{(2)}}\)
denote the correlator occurring with equal signs.
Let \(\sigma\) be the repeated sign of
\(C_{\vec W^{(2)}}\).
Among the two occurrences of
\(C_{\vec W^{(1)}}\),
denote by
\(\vec Z^{(+)}\)
the one whose sign is \(\sigma\), and by
\(\vec Z^{(-)}\)
the one whose sign is \(-\sigma\).
Then
\begin{equation}\label{eq:EM-final-correlator-pattern}
\overline C_{\vec Z}
=
\begin{cases}
 \sigma C_{\vec W^{(1)}},
 & \vec Z=\vec Z^{(+)},\\
-\sigma C_{\vec W^{(1)}},
 & \vec Z=\vec Z^{(-)},\\
 \sigma C_{\vec W^{(2)}},
 & \text{otherwise}.
\end{cases}
\end{equation}

This is precisely the affine map constructed in Theorem~\ref{thm:characterization}.
\end{proof}

\paragraph{Comprehensive list of PSP maps}
Here, we provide a complete list of PSP maps, not necessarily entangling.
First, there are \(8\cdot8\cdot2=128\) reversible maps consisting of relabelings of inputs, outputs, and subsystems \cite{Gross2010}.
We also have \(4\cdot 24\cdot 23=2208\) two-outcome measure-and-prepare maps defined by Eq.~\eqref{eq:M-P} for \(s_+\neq s_-\) and \(24\) constant maps defined by
\begin{equation}
    \phi(p)=s
\end{equation}
for \(s\in \ext(\NSstates)\).

In addition to the entangling maps defined by Theorem~\ref{thm:characterization}, there are the following \(4\cdot3\cdot2^4\cdot2=384\) nontrivial PSP maps:
Choose an ordered pair \(\vec W^{(X)}\neq\vec W^{(Y)}\in\{X,Y\}^2\) and sign vectors \(\vec \alpha=(\alpha_X,\alpha_Y),\; \vec \beta=(\beta_X,\beta_Y)\in\{\pm1\}^2\).
Then, the maps defined by
\begin{equation}\label{eq:SEP-PSP-A}
    \begin{aligned}
        \phi_A(p)(\vec m \mid \vec Z)
        &=\frac{1}{4}\{1+m_A\alpha_{Z_A}+m_B\beta_{Z_B}C_{\vec W^{(Z_B)}}(p)\\
        &\quad\quad\quad +m_Am_B\alpha_{Z_A}\beta_{Z_B}C_{\vec W^{(Z_B)}}(p)\},
    \end{aligned}
\end{equation}
\begin{equation}\label{eq:SEP-PSP-B}
    \begin{aligned}
        \phi_B(p)(\vec m \mid \vec Z)
        &=\frac{1}{4}\{1+m_A\alpha_{Z_A}C_{\vec W^{(Z_A)}}(p)+m_B\beta_{Z_B}\\
        &\quad\quad\quad +m_Am_B\alpha_{Z_A}\beta_{Z_B}C_{\vec W^{(Z_A)}}(p)\}
    \end{aligned}
\end{equation}
are PSP maps.
These $\phi_{A,B}$ map every pure state into a separable pure state (cf.~Eq.~\eqref{eq:characterization_sep_pure}).
Together with the reversible, measure-and-prepare, constant, and entangling maps listed above, they exhaust all PSP maps on \(\NSstates\); see Supplemental Material~\cite{SM} for the proof.

\clearpage
\clearpage\clearpage
\makeatletter
   	\c@secnumdepth=4
\makeatother

\setcounter{equation}{0}
\setcounter{figure}{0}
\setcounter{section}{0}
\setcounter{table}{0}
\setcounter{thm}{0} 
\renewcommand{\theequation}{S\arabic{equation}}
\renewcommand{\thefigure}{S\arabic{figure}}
\renewcommand{\theHequation}{\theequation}
\renewcommand{\theHfigure}{\thefigure}
\renewcommand{\thethm}{S\arabic{thm}}
\renewcommand{\thesection}{S\arabic{section}}

\renewcommand{\bibnumfmt}[1]{[S#1]}
\renewcommand{\citenumfont}[1]{S#1}

\renewcommand{\thepage}{S\arabic{page}}
\setcounter{page}{1}

\makeatletter
\let\orig@footnote\footnote
\long\def\SM@pagefootnote#1{%
  \refstepcounter{footnote}%
  \begingroup
    \protected@xdef\@thefnmark{\thefootnote}%
  \endgroup
  \@footnotemark
  \@footnotetext{#1}%
}
\let\footnote\SM@pagefootnote
\makeatother

\title{
  Supplemental Material for \protect\\
  ``\nameoftitle''
}

\maketitle
\onecolumngrid
This Supplemental Material provides proofs of the technical facts used in the End Matter to establish that the identified family exhausts all entangling PSP maps, and presents the complete classification of all PSP maps.

For a given dynamical transformation $\phi$,
we denote \(\overline A_Z\coloneqq\phi^\dagger A_Z\), \(\overline B_Z\coloneqq\phi^\dagger B_Z\), and \(\overline C_{\vec Z}\coloneqq\phi^\dagger C_{\vec Z}\) for \(Z\in\{X,Y\}\) and \(\vec Z\in\{X,Y\}^2\).
For \(Z\in\{X,Y\}\), denote the other setting by 
\(\check{Z}\), i.e., 
\(\check{X}=Y\) and \(\check{Y}=X\).

\section{Technical details for the proof of the completeness of entangling PSP maps}
We prove three facts that are used in the proof of Theorem~\ref{thm:completeness} in End Matter. 
Note that the following statements hold without assuming that the map \(\phi\) is entangling.

Let us recall the positivity conditions (Eq.~\eqref{eq:positivity} in the main text):
\begin{equation}
    \abs{A_{Z_A}-\delta B_{Z_B}}\le 1- \delta C_{\vec{Z}},\ \delta=\pm1,
    \label{eq:SM-positivity}
\end{equation}
which completely determine $\NSstates$ with $A_U,B_V,C_{\vec Z}\in[-1,1]$.
For $\vec W\in\{X,Y\}^2$ and $\delta\in\{\pm1\}$, we define 
\begin{equation}
F_{\vec W}^{\delta}
\coloneqq
\{
p\in\NSstates:
C_{\vec W}(p)=\delta
\}.
\end{equation}
An affine observable \(O\in\Observables\) vanishing on
\(F_{\vec W}^{\delta}\) satisfies
\begin{equation}\label{eq:SM-face-vanishing}
O
\in
\operatorname{span}
\{
\mathbf1-\delta C_{\vec W},
\;
A_{W_A}-\delta B_{W_B}
\}.
\end{equation}
To show this, 
we take the following coordinates:
\begin{align}
    A_{W_A}=\delta B_{W_B}&=\epsilon;\\
    A_{\check{W}_A}&=\epsilon_{\check{W}_A};\\
    B_{\check{W}_B}&=\epsilon_{\check{W}_B};\\
    C_{\vec W}&=\delta;\\
    C_{\vec W'}&=\epsilon_{\vec W'}\quad(\vec W'\neq \vec W),
\end{align}
where $\epsilon,\epsilon_{\check{W}_A},\epsilon_{\check{W}_B}$ and $\epsilon_{\vec W'}$ are
arbitrary real numbers in $[-1/3,1/3]$.
These coordinates satisfy Eq.~\eqref{eq:SM-positivity}, 
and therefore define some $p\in F_{\vec W}^{\delta}$.
By expanding \(O\) in the eight coordinates, we have
\begin{equation}
O
=
c\mathbf1
+
\sum_{U\in\{X,Y\}}\alpha_U A_U
+
\sum_{V\in\{X,Y\}}\beta_V B_V
+
\sum_{\vec W'\in\{X,Y\}^2}
\gamma_{\vec W'}C_{\vec W'}.
\end{equation}
By the vanishing assumption, we have
\begin{equation}
    0=c+(\alpha_{W_A}+\delta\beta_{W_B})\epsilon+
\alpha_{\check{W}_A}\epsilon_{\check{W}_A}+\beta_{\check{W}_B}\epsilon_{\check{W}_B}+\gamma_{\vec W}\delta+\sum_{\vec W'\in\{X,Y\}^2\setminus\{\vec W\}}\gamma_{\vec W'}\epsilon_{\vec W'}.
\end{equation}
Since $\epsilon$'s can be taken arbitrarily, we have
\begin{align}
    c+\gamma_{\vec W}\delta=0,\quad \alpha_{W_A}+\delta\beta_{W_B}=0,\quad 
    \alpha_{\check{W}_A}=\beta_{\check{W}_B}=\gamma_{\vec W'}=0,
\end{align}
which yields Eq.~\eqref{eq:SM-face-vanishing}.

We are now in a position to show the second fact used in End Matter.
\begin{lem}\label{lem:SM-marginal-support}
Suppose that
\begin{equation}\label{eq:SM-marginal-support-assumption}
\overline{C}_{U,V}
=
\delta C_{\vec W},
\qquad
\delta\in\{\pm1\}.
\end{equation}
Then
\begin{equation}\label{eq:SM-marginal-support-conclusion}
\overline{A}_U,
\overline{B}_V
\in
\mathcal V_{\vec W}
\coloneqq
\operatorname{span}
\{\mathbf1,A_{W_A},B_{W_B},C_{\vec W}\}.
\end{equation}
\end{lem}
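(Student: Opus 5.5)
The plan is to use the face-vanishing characterization, Eq.~\eqref{eq:SM-face-vanishing}, applied to the two faces $F^{\pm1}_{\vec W}$ of the input state space. By Lemma~\ref{lem:heisenberg-correlator} and the hypothesis, we know exactly where the output correlator $C_{U,V}$ is extremal.

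Concretely, take $\epsilon\in\{\pm1\}$ and an input state $p\in F^{\epsilon}_{\vec W}$. Then
\[
C_{U,V}(\phi(p))=\overline C_{U,V}(p)=\delta\epsilon .
\]
Applying the positivity condition Eq.~\eqref{eq:SM-positivity} to the output state $q=\phi(p)$ with sign $\delta\epsilon$ gives $\abs{A_U(q)-\delta\epsilon B_V(q)}\le 0$, so $A_U(q)=\delta\epsilon B_V(q)$. In the Heisenberg picture this says that the observable $\overline A_U-\delta\epsilon\,\overline B_V$ vanishes on the whole face $F^{\epsilon}_{\vec W}$. By Eq.~\eqref{eq:SM-face-vanishing},
\[
\overline A_U-\delta\epsilon\,\overline B_V\in\operatorname{span}\{\mathbf1-\epsilon C_{\vec W},\;A_{W_A}-\epsilon B_{W_B}\}\subseteq\mathcal V_{\vec W}.
\]

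Doing this for both $\epsilon=+1$ and $\epsilon=-1$ puts both $\overline A_U-\delta\overline B_V$ and $\overline A_U+\delta\overline B_V$ in $\mathcal V_{\vec W}$. Taking half the sum and half the difference then gives $\overline A_U\in\mathcal V_{\vec W}$ and $\overline B_V\in\mathcal V_{\vec W}$, which is the claim.

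The only points needing care are that both faces $F^{\pm1}_{\vec W}$ are nonempty and are exactly the sets used in Eq.~\eqref{eq:SM-face-vanishing}. This holds by the coordinate construction given just before the lemma, which exhibits states in $F^{\delta}_{\vec W}$ for either sign. The remaining step is to check that positivity with equality $C=\pm1$ forces the marginal relation, which is immediate from Eq.~\eqref{eq:SM-positivity}. I expect no real obstacle; the main subtlety is remembering to use both faces, since a single face only controls one linear combination of $\overline A_U$ and $\overline B_V$.
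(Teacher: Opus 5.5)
Your proposal is correct and follows the paper's proof: on the faces $F^{\pm\delta}_{\vec W}$ the output correlator $\overline C_{U,V}$ equals $\pm1$, so positivity makes $\overline A_U\mp\overline B_V$ vanish there. Eq.~\eqref{eq:SM-face-vanishing} then places both combinations in $\mathcal V_{\vec W}$, and half the sum and half the difference give the claim. The appeal to Lemma~\ref{lem:heisenberg-correlator} is unnecessary, since the hypothesis alone fixes $\overline C_{U,V}$ on each face.
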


\begin{proof}
Applying Eq.~\eqref{eq:SM-positivity} to the output state gives
\begin{equation}
|
\overline{A}_U-\overline{B}_V
|
\le
1-\delta C_{\vec W},
\qquad
|
\overline{A}_U+\overline{B}_V
|
\le
1+\delta C_{\vec W}.
\end{equation}
Hence
\(
\overline{A}_U-\overline{B}_V
\)
vanishes on
\(
F_{\vec W}^{\delta}
\),
whereas
\(
\overline{A}_U+\overline{B}_V
\)
vanishes on
\(
F_{\vec W}^{-\delta}
\).
By Eq.~\eqref{eq:SM-face-vanishing},
\begin{align}
\overline{A}_U-\overline{B}_V
&\in
\operatorname{span}
\{
\mathbf1-\delta C_{\vec W},
\;
A_{W_A}-\delta B_{W_B}
\},
\\
\overline{A}_U+\overline{B}_V
&\in
\operatorname{span}
\{
\mathbf1+\delta C_{\vec W},
\;
A_{W_A}+ \delta B_{W_B}
\}.
\end{align}
Taking the sum and difference proves the claim.
\end{proof}

Using this lemma, we can show the first fact used in End Matter.
\begin{lem}\label{lem:SM-one-correlator}
Let
\(
\phi:\NSstates\to\NSstates
\)
be a PSP map.
If at most one distinct input correlator occurs among the nonconstant
observables
\(
\overline{C}_{U,V}
\),
then \(\phi\) is measure-and-prepare.
\end{lem}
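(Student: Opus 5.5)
The strategy is to show that, under the hypothesis, every output observable lies in $\operatorname{span}\{\mathbf1,C_{\vec W}\}$ for a single input setting pair $\vec W$. Concretely, we would show $\overline A_U,\overline B_V,\overline C_{\vec Z}\in\operatorname{span}\{\mathbf1,C_{\vec W}\}$ for all $U,V,\vec Z$. Once this holds, $\phi(p)$ is an affine function of the single number $C_{\vec W}(p)$. Setting $s_\pm\coloneqq\phi(p_\pm)$ for any $p_\pm$ with $C_{\vec W}(p_\pm)=\pm1$ is then well defined, and $\phi(p)=\frac{1+C_{\vec W}(p)}2 s_++\frac{1-C_{\vec W}(p)}2 s_-$, which is exactly the measure-and-prepare form of Eq.~\eqref{eq:M-P}. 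By Lemma~\ref{lem:heisenberg-correlator} and the hypothesis, every $\overline C_{\vec Z}$ is already $\pm\mathbf1$ or $\pm C_{\vec W}$ for one fixed $\vec W$. So only the local marginals need work.

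Our main tool is the product $\Pi\coloneqq\prod_{\vec Z}\overline C_{\vec Z}$, restricted to pure inputs. It is a signed monomial $\pm C_{\vec W}^k$, and since pure inputs realize both signs of $C_{\vec W}$, we split into cases.

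\emph{Case $\Pi\equiv-1$.} Every pure input is mapped to a PR box, so all output marginals vanish on $\ext(\NSstates)$ and hence vanish identically. This finishes this case.

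\emph{Case $\Pi\equiv+1$.} Every pure output is deterministic, so each $\overline A_U,\overline B_V$ is $\pm1$-valued on pure states. By the argument in the proof of Lemma~\ref{lem:heisenberg-correlator}, each marginal is then in $\{\pm\mathbf1\}\cup\{\pm C_{\vec W'}\}$. If some $\overline C_{U,V}=\pm C_{\vec W}$, Lemma~\ref{lem:SM-marginal-support} forces $\vec W'=\vec W$ for $\overline A_U,\overline B_V$. The remaining marginals are fixed through the deterministic relations $B_V=C_{X,V}A_X$ and $A_Y=C_{Y,V}B_V$ on pure outputs. Equivalently, they are fixed by the positivity identities $\overline A_U=\delta\overline B_V$, which hold whenever $\overline C_{U,V}=\delta\mathbf1$ (as in Eq.~\eqref{eq:EM-bridge-marginal}). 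Each marginal is thus a product of $\pm1$'s, $\pm C_{\vec W}$'s, and one $\pm1$ or $\pm C_{\vec W'}$; being affine, it must again be of the form $\pm\mathbf1$ or $\pm C$. If all $\overline C_{\vec Z}$ are constant, the single surviving observable $C_{\vec W'}$ plays the role of $C_{\vec W}$.

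\emph{Case $\Pi=\pm C_{\vec W}$.} Inputs on one face $F^{s}_{\vec W}$ are sent to PR boxes, and those on $F^{-s}_{\vec W}$ are sent to deterministic boxes. Each marginal vanishes on the pure states of $F^{s}_{\vec W}$, hence on the whole face. By Eq.~\eqref{eq:SM-face-vanishing} it lies in $\operatorname{span}\{\mathbf1-sC_{\vec W},\,A_{W_A}-sB_{W_B}\}$. Requiring values $\pm1$ on the pure states of $F^{-s}_{\vec W}$ leaves only two options: $\pm\frac12(\mathbf1-sC_{\vec W})$, which is good, or $\pm\frac12(A_{W_A}-sB_{W_B})$.

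The main obstacle is excluding this last option, where an output marginal copies an input marginal. Our first attempt would combine it with the deterministic-output relation $\overline B_V=\overline C_{U,V}\,\overline A_U$ on $F^{-s}_{\vec W}$. We would then check affineness on a mixed state of $F^{-s}_{\vec W}$, using positivity Eq.~\eqref{eq:SM-positivity} of the output. For example, we would take the midpoint of two pure states that differ in $A_{W_A}$ and test $\abs{\overline A_U-\delta\overline B_V}\le1-\delta\overline C_{U,V}$ at that midpoint. If that fails to give a contradiction, we would use the pure states on $F^{s}_{\vec W}$ adjacent to such pairs.
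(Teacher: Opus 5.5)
Your case split by $\Pi$ and your treatment of $\Pi\equiv\pm1$ are sound. The proof, however, is left unfinished in the case $\Pi=\pm C_{\vec W}$, and the open part rests on a miscount.

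Write $\overline M=x(\mathbf1-sC_{\vec W})+y(A_{W_A}-sB_{W_B})$. The pure states of $F^{-s}_{\vec W}$ are of two kinds:
\begin{itemize}
\item the eight deterministic boxes, on which $A_{W_A}-sB_{W_B}=2A_{W_A}=\pm2$;
\item the four PR boxes with $C_{\vec W}=-s$, on which $A_{W_A}-sB_{W_B}=0$ and $\mathbf1-sC_{\vec W}=2$.
\end{itemize}
You used only the first kind. The PR inputs also have $\Pi=+1$, so by PSP they are sent to deterministic boxes, and $\overline M=2x\in\{\pm1\}$ there. The deterministic vertices then require $2x\pm2y\in\{\pm1\}$ for both signs, which forces $y=0$. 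Hence only $\overline M=\pm\frac12(\mathbf1-sC_{\vec W})$ survives, and the ``copying'' option you single out as the main obstacle never arises. With this one line your argument is complete.

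This is exactly the device in the paper's proof. There one expands $\overline A_U=a\mathbf1+bA_S+cB_T+dC_{S,T}$ via Lemma~\ref{lem:SM-marginal-support} and evaluates on \emph{both} PR and deterministic inputs with the same value of $C_{S,T}$, which gives $b+c=0$ and $b-c=0$.

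As written, though, the case is not closed, and the fallbacks you sketch cannot close it. The output at a midpoint of two inputs is the midpoint of the two outputs. Positivity there therefore follows from positivity at the endpoints and can never give a new contradiction. Likewise, $\overline M$ vanishes on all of $F^{s}_{\vec W}$ by construction, so those vertices carry no further constraint.

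In fact the copying option is compatible with everything except purity at the PR vertices of $F^{-s}_{\vec W}$. Consider the affine map with $(\overline C_{X,X},\overline C_{X,Y},\overline C_{Y,X},\overline C_{Y,Y})=(C_{X,X},\mathbf1,\mathbf1,-\mathbf1)$ and $\overline A_X=-\overline A_Y=-\overline B_X=\overline B_Y=\frac12(A_X-B_X)$. It sends $\NSstates$ into itself. It maps every vertex with $C_{X,X}=+1$ to a PR box and every deterministic vertex with $C_{X,X}=-1$ to a deterministic box. Yet it maps each PR box with $C_{X,X}=-1$ to an equal mixture of two deterministic boxes.

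So the missing ingredient is precisely the PSP condition at those PR inputs. Your stated requirement (values $\pm1$ on all pure states of $F^{-s}_{\vec W}$) already contains it; your computation dropped it.
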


\begin{proof}
By Lemma~\ref{lem:heisenberg-correlator} in the main text, 
each output correlator is either a
constant sign or a signed input correlator.

If no input correlator occurs, the output correlation vector is constant.
If it is that of a PR box, there is a unique compatible pure state, so \(\phi\) is a constant preparation map.
If it is deterministic, there are exactly two compatible pure states, 
which are 
uniquely distinguished by the value of any one of the local observables, say \(A_X\).
Hence the output is completely determined by the binary value of \(\overline A_X\) at the input state, and 
\(\phi\) can be implemented by measuring \(\overline A_X\), 
followed by an appropriate state preparation.

Suppose that exactly one input correlator \(C_{S,T}\) occurs.
For a pure input \(p\), set
\[
s(p)\coloneqq C_{S,T}(p)\in\{\pm1\}.
\]
The output correlation vector depends only on \(s(p)\).

Choose an output setting \((U,V)\) such that
\(
\overline{C}_{U,V}=\pm C_{S,T}
\).
By Lemma~\ref{lem:SM-marginal-support},
\begin{equation}\label{eq:SM-single-correlator-marginal}
\overline{A}_U=a\mathbf1+bA_S+cB_T+dC_{S,T}.
\end{equation}

If the output correlation vector corresponding to \(s(p)=+1\) is that of a PR box,
evaluating Eq.~\eqref{eq:SM-single-correlator-marginal} on PR and deterministic inputs with \(s(p)=+1\) yields
\begin{equation}
a+d=a\pm(b+c)+d=0.
\end{equation}
Hence,
\begin{equation}
b+c=0.
\end{equation}

On the other hand, suppose that the output correlation vector corresponding to \(s(p)=+1\) is deterministic.
Evaluating Eq.~\eqref{eq:SM-single-correlator-marginal} on PR and deterministic inputs with \(s(p)=+1\) gives
\begin{equation}
a+d,\;a+b+c+d,\;a-b-c+d\in\{\pm1\}.
\end{equation}
If \(a+d=+1\), the second condition implies \(b+c=0\) or \(-2\), while the third condition further forces \(b+c=0\).
Similarly, if \(a+d=-1\), one again obtains \(b+c=0\).

By an analogous argument, whether the output correlation vector corresponding to \(s(p)=-1\) is that of a PR box or deterministic, one finds
\begin{equation}
b-c=0.
\end{equation}

Therefore, in all cases,
\begin{equation}
\overline{A}_U=a\mathbf1+dC_{S,T}.
\end{equation}
Similarly, we have \(\overline B_V\in\operatorname{span}\{\bm1,C_{S,T}\}\).
For the setting \(\check U(\neq U)\), 
\(\overline C_{\check U,V}\) is either \(\pm C_{S,T}\) or \(\pm\bm1\).
In the former case, the same argument yields
\(
\overline A_{\check U}
\in\operatorname{span}\{\bm1,C_{S,T}\}.
\)
In the latter case, Eq.~\eqref{eq:SM-positivity} implies
\(
\overline A_{\check U}=\pm\overline B_V.
\)
Since
\(\overline B_V\in\operatorname{span}\{\bm1,C_{S,T}\}\),
we again obtain
\(
\overline A_{\check U}
\in\operatorname{span}\{\bm1,C_{S,T}\}.
\)
The same argument applies to \(\overline B_{\check{V}}\) and we have
\begin{equation}
    \overline{A}_X,\overline{A}_Y,\overline{B}_X,\overline{B}_Y\in\operatorname{span}\{\bm1,C_{S,T}\}.
\end{equation}
Therefore, every coordinate of the output state \(\phi(p)\) is completely determined by \(C_{S,T}(p)\). 
Hence, \(\phi\) can be implemented by measuring \(C_{S,T}\), 
followed by an appropriate state preparation.
\end{proof}

We conclude this section with the proof of the third fact used in End Matter.
\begin{lem}\label{lem:possible_local_marginals}
Let
\(
\phi:\NSstates\to\NSstates
\)
be a PSP map.
Suppose that \(\overline M\) with \(M\in \{A_X,A_Y,B_X,B_Y\}\) satisfies
\begin{equation}
\overline M\in\mathcal{V}_{\vec W}\cap \mathcal{V}_{\vec W'}\quad (\vec W\neq \vec W').
\end{equation}
Then
\begin{equation}
\overline M\in\{\mathbf0,\pm\mathbf1 ,\pm N\},
\end{equation}
where \(N=A_{W_A}\) if \(W_A=W_A'\),  \(N=B_{W_B}\) if \(W_B=W_B'\) 
and \(N=\bm0\) otherwise.
\end{lem}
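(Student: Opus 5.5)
We first identify the linear-algebraic intersection and then use pure-state preservation to pin down the coefficients. The observables $\mathbf1,A_X,A_Y,B_X,B_Y,C_{\vec Z}$ form a basis of $\Observables$. Since $\mathcal V_{\vec W}$ and $\mathcal V_{\vec W'}$ are spanned by subsets of this basis, their intersection is spanned by the common basis elements:
\[
\mathcal V_{\vec W}\cap\mathcal V_{\vec W'}=\operatorname{span}\bigl(\{\mathbf1,A_{W_A},B_{W_B},C_{\vec W}\}\cap\{\mathbf1,A_{W'_A},B_{W'_B},C_{\vec W'}\}\bigr).
\]
Because $\vec W\neq\vec W'$, the correlators differ. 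At most one of $W_A=W'_A$ and $W_B=W'_B$ can hold. The intersection is therefore $\operatorname{span}\{\mathbf1,N\}$, with $N$ as in the statement; when $N=\mathbf0$ it is just $\operatorname{span}\{\mathbf1\}$. Hence $\overline M=a\mathbf1+bN$ for real $a,b$, and it remains to show that $(a,b)$ lies in $\{(0,0),(\pm1,0),(0,\pm1)\}$.

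Next we use the PSP property. For each pure input $p$, the output $\phi(p)$ is either a deterministic box or a PR box. Consequently every output local marginal takes values in $\{0,\pm1\}$ on pure inputs: it is $\pm1$ for deterministic outputs and $0$ for PR outputs. So $\overline M(p)\in\{0,\pm1\}$ for all $p\in\ext(\NSstates)$.

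We evaluate this on pure inputs.
\begin{itemize}
\item On a PR-box input, $N(p)=0$, so $\overline M(p)=a$, which forces $a\in\{0,\pm1\}$.
\item On deterministic inputs, $N(p)$ takes both values $\pm1$ when $N\neq\mathbf0$. This gives $a+b,\ a-b\in\{0,\pm1\}$.
\end{itemize}
We then check the cases of $a$:
\begin{itemize}
\item If $a=0$, then $b\in\{0,\pm1\}$, giving $\mathbf0$ or $\pm N$.
\item If $a=1$, then $1\pm b\in\{0,\pm1\}$ forces $b\in\{0,-1\}\cap\{0,1\}=\{0\}$, giving $\mathbf1$.
\item If $a=-1$, the symmetric argument gives $-\mathbf1$.
\end{itemize}
When $N=\mathbf0$ only the constant part survives, so $\overline M\in\{\mathbf0,\pm\mathbf1\}$. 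In all cases $\overline M\in\{\mathbf0,\pm\mathbf1,\pm N\}$.

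The only step needing care is the intersection computation, which relies on the linear independence of the nine basis observables. That independence is implicit in the coordinate identification of $\NSstates$ with Eq.~\eqref{eq:positivity}: the polytope is full-dimensional in the eight coordinates, as witnessed, for example, by the small-$\epsilon$ neighborhood used to prove Eq.~\eqref{eq:SM-face-vanishing}. With that settled, the rest is a finite case check.
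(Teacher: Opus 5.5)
Your proof is correct and follows the paper's own argument. You reduce $\overline M$ to the form $a\mathbf1+bN$, then use that $N$ equals $0$ on PR inputs and $\pm1$ on deterministic inputs, while $\overline M$ must take values in $\{0,\pm1\}$ on pure inputs. You also justify the intersection $\mathcal V_{\vec W}\cap\mathcal V_{\vec W'}=\operatorname{span}\{\mathbf1,N\}$ via linear independence, which the paper takes for granted. There is one harmless slip in the case $a=1$: $1+b\in\{0,\pm1\}$ gives $b\in\{-2,-1,0\}$, not $\{0,-1\}$, and $1-b\in\{0,\pm1\}$ gives $b\in\{0,1,2\}$, but the intersection is still $\{0\}$, so the conclusion stands.
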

\begin{proof}
We first consider the case in which \(W_A=W_A'\).
From the assumption, we can write
\begin{equation}\label{eq:EM-local-affine-form}
\overline M=c\mathbf1+dA_{W_A}
\end{equation}
for some real numbers \(c,d\).

On pure input states, a local marginal takes values in
\(\{-1,0,+1\}\):
it is \(\pm1\) on deterministic boxes and \(0\) on PR boxes.
Since \(\phi\) is PSP, the same is true of \(\overline M\).
The only affine functions
\begin{equation}
t\longmapsto c+dt
\end{equation}
mapping
\(\{-1,0,+1\}\)
into itself are limited to $0,\pm1$ and $\pm t$.  
Therefore, we have
\begin{equation}
\overline M\in\{\mathbf0,\pm\mathbf1 ,\pm  A_{W_A}\}.
\end{equation}
The other cases can be shown in the same way.
\end{proof}

\section{Proof of the completeness of all PSP maps}

We say that a PSP map \(\phi\) destroys entanglement or is disentangling if \(\phi(\Ent)\cap\Sep\neq\varnothing\).
As in the entanglement generation case, the following holds.
\begin{prop}[Completeness of disentangling PSP maps]\label{prop:EDcompleteness} 
Every disentangling PSP map on \(\NSstates\) that is not measure-and-prepare is one of the \(384\) maps of Eq.~\eqref{eq:SEP-PSP-A} and Eq.~\eqref{eq:SEP-PSP-B} in End Matter.
\end{prop}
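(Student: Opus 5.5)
Let \(\phi\) be a disentangling PSP map that is not measure-and-prepare. The plan mirrors the proof of Theorem~\ref{thm:completeness}, with the roles of local marginals and correlators exchanged. By Lemma~\ref{lem:heisenberg-correlator}, each \(\overline C_{\vec Z}\) lies in \(\{\pm\mathbf1\}\cup\{\pm C_{\vec W}\}\). By Lemma~\ref{lem:SM-one-correlator}, at least two distinct input correlators occur among the \(\overline C_{\vec Z}\).

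First, fix \(p_0\in\Ent\) with \(\phi(p_0)\in\Sep\). Since every \(C_{\vec W}(p_0)=\pm1\) and every local marginal of \(p_0\) vanishes, each \(\overline M\) evaluated at \(p_0\) must be \(\pm1\). I would show that some output local marginal is a nonconstant signed correlator. Take an output entry \(\overline C_{U,V}=\pm C_{\vec W}\). Lemma~\ref{lem:SM-marginal-support} gives \(\overline A_U,\overline B_V\in\mathcal V_{\vec W}\). Then, as in the proof of Lemma~\ref{lem:SM-one-correlator}, I would evaluate \(\overline A_U=a\mathbf1+bA_{W_A}+cB_{W_B}+dC_{\vec W}\) on deterministic and PR inputs with fixed \(C_{\vec W}\). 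The values must lie in \(\{-1,0,1\}\), and the positivity constraint \(\overline A_U=\pm\overline B_V\) must hold on the faces \(F^{\pm}_{\vec W}\). These should force \(\overline A_U\in\{\mathbf0,\pm\mathbf1,\pm A_{W_A},\pm B_{W_B},\pm C_{\vec W}\}\), up to a few mixed candidates that I will eliminate by direct checking. Call this short list \((\ast)\).

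Next comes the case analysis on the \(2\times2\) array of Eq.~\eqref{eq:EM-output-square}.

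\emph{Neighboring entries with distinct correlators.} Say \(\overline C_{X,X}=\pm C_{\vec W^{(1)}}\) and \(\overline C_{X,Y}=\pm C_{\vec W^{(2)}}\). Lemma~\ref{lem:possible_local_marginals} then gives \(\overline A_X\in\{\mathbf0,\pm\mathbf1,\pm N\}\). At \(p_0\) the value must be \(\pm1\), and \(N(p_0)=0\), so \(\overline A_X=\pm\mathbf1\). Thus Alice's \(X\)-output is deterministic on every pure input. Hence every pure output is deterministic, so \(\phi\) maps all pure states into \(\Sep\).

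\emph{No neighboring entries with distinct correlators.} Then the distinct correlators sit on a diagonal and the other two entries are constant. The bridge relation of Eq.~\eqref{eq:EM-bridge-marginal} again places a marginal in an intersection of two spaces \(\mathcal V_{\vec W}\), and the same argument applies.

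So every pure input goes to a deterministic box. By Eq.~\eqref{eq:characterization_sep_pure}, this means \(\overline M(p)\in\{\pm1\}\) on all pure \(p\) for every output marginal, with \(\overline C_{\vec Z}=\overline A_{Z_A}\overline B_{Z_B}\) on \(\ext(\NSstates)\). Lemma~\ref{lem:heisenberg-correlator} applies verbatim to each \(\overline A_U\) and \(\overline B_V\), since its proof only used \(\pm1\)-valuedness on pure states. Hence each marginal is \(\pm\mathbf1\) or \(\pm C_{\vec W}\).

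It remains to solve the product constraint: \(\overline C_{U,V}=\overline A_U\overline B_V\) must be a single signed correlator or a constant sign, pointwise on the full sign cube of correlator values. A product of two signed correlators is again a single correlator or constant only if one factor is constant or the two are equal. Equal factors give a constant correlator. So the nonconstant structure of the array splits into two cases:
\begin{enumerate}
\item[(i)] Both \(\overline A_X\) and \(\overline A_Y\) are constants \(\alpha_X,\alpha_Y\). Then the \(\overline B_V\) carry correlators, which must be distinct, since otherwise \(\phi\) is measure-and-prepare by Lemma~\ref{lem:SM-one-correlator}. This yields Eq.~\eqref{eq:SEP-PSP-A}.
\item[(ii)] Symmetrically, both \(\overline B_V\) are constant and the \(\overline A_U\) carry distinct correlators, yielding Eq.~\eqref{eq:SEP-PSP-B}.
\end{enumerate}
A mixed assignment with one nonconstant \(\overline A\) and one nonconstant \(\overline B\) makes their product a correlator that is neither constant nor single whenever the two differ. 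If they coincide, only one correlator occurs, which is measure-and-prepare.

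Counting gives \(2\cdot(4\cdot3)\cdot2^4=384\).

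The main obstacle is the short list \((\ast)\), i.e., bounding the single-correlator marginal support so that the intersection argument yields a constant rather than merely some element of \(\mathcal V_{\vec W}\). I expect this to reduce to the same evaluation trick as in Lemma~\ref{lem:SM-one-correlator}.
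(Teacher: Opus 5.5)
Your proposal is correct. Its first half matches the paper's argument: you use Lemma~\ref{lem:SM-marginal-support} and Lemma~\ref{lem:possible_local_marginals} on the shared marginal for neighboring entries, and the bridge relation \eqref{eq:EM-bridge-marginal} for the diagonal configuration. This forces some output marginal to be $\pm\mathbf1$, so every pure input lands in $\Sep$.

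The endgame is genuinely different. The paper reruns the parity count from the entangling case: each input correlator must occur an even number of times, so exactly two occur, each twice, with no constants. It then reconstructs $\overline B_V=\beta_V C_{\vec W^{(V)}}$ from $C_{X,V}=A_XB_V$ on $\Sep$. Next it pins down $\overline C_{Y,X}$, $\overline C_{Y,Y}$ and $\overline A_Y$ via Lemma~\ref{lem:SM-marginal-support} together with $C_{\vec W^{(X)}}\notin\mathcal V_{\vec W'}$. The column case \eqref{eq:SEP-PSP-B} is handled ``analogously.''

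You instead note that deterministic outputs make all four output marginals $\pm1$-valued on pure states. So the classification proved inside Lemma~\ref{lem:heisenberg-correlator} applies to them directly, giving $\overline A_U,\overline B_V\in\{\pm\mathbf1,\pm C_{\vec W}\}$. What remains is the factorization $\overline C_{U,V}=\overline A_U\overline B_V$ on the full sign cube. This route is shorter and yields \eqref{eq:SEP-PSP-A} and \eqref{eq:SEP-PSP-B} simultaneously by symmetry. It also disposes of the diagonal configuration automatically, since all four output correlators end up nonconstant, with no parity count needed. The paper's route, in exchange, runs in parallel with the entangling proof. There the outputs are PR boxes with vanishing marginals, so your factorization is unavailable.

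One correction to your own assessment: the list $(\ast)$ is never used, and neither is the claim that some output marginal is a nonconstant signed correlator. The intersection step is supplied by Lemma~\ref{lem:possible_local_marginals} itself, so $(\ast)$ is not the main obstacle and you should simply drop it. As stated, it would not follow from your evaluation trick anyway. When $\overline C_{U,V}=C_{\vec W}$, take $\overline A_U=\overline B_V=\tfrac12(A_{W_A}+B_{W_B})$. This takes values in $\{-1,0,1\}$ on all pure states. It also satisfies $\overline A_U=\pm\overline B_V$ on $F^{\pm}_{\vec W}$, since $A_{W_A}+B_{W_B}=0$ on $F^{-}_{\vec W}$ by positivity. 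So such ``mixed candidates'' cannot be eliminated by those checks alone.
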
 
\begin{proof}
Let \(\phi\) be a disentangling PSP map that is not measure-and-prepare.
By Lemma~\ref{lem:SM-one-correlator}, more than one distinct input correlator
occurs among the four observables \(\overline C_{\vec Z}\).

Arrange the four observables according to their output settings:
\begin{equation}\label{eq:SM-output-square}
\begin{pmatrix}
\overline C_{X,X} & \overline C_{X,Y}\\
\overline C_{Y,X} & \overline C_{Y,Y}
\end{pmatrix}.
\end{equation}

We can exclude the case in which no neighboring entries of Eq.~\eqref{eq:SM-output-square} involve distinct input correlators by exactly the same argument as in the entangling case.
Without loss of generality, write
\begin{equation}\label{eq:SM-output-correlator}
\overline C_{X,X}=\pm C_{\vec W^{(X)}},
\qquad
\overline C_{X,Y}=\pm C_{\vec W^{(Y)}},
\qquad
\vec W^{(X)}\neq\vec W^{(Y)}.
\end{equation}
From Lemma~\ref{lem:SM-marginal-support},
\begin{equation}
\overline A_X
\in
\mathcal V_{\vec W^{(X)}}\cap\mathcal V_{\vec W^{(Y)}}.
\end{equation}

Since \(\phi\) is disentangling, there exists a PR box input state \(p_0\) that is mapped to a deterministic state.
At this input,
\begin{equation}
\overline A_X(p_0)=\pm1,
\qquad
M(p_0)=0
\end{equation}
for \(M\in\{A_X,A_Y,B_X,B_Y\}\).
Among the choices allowed by Lemma~\ref{lem:possible_local_marginals}, this is only possible for
\begin{equation}\label{eq:SM-vanishing-local-marginal}
\overline A_X=\alpha_X\textbf1\qquad(\alpha_X\in\{\pm1\}).
\end{equation}
Hence every pure input is mapped to a deterministic state.

By exactly the same argument as in the entangling case, we find that exactly two distinct input correlators occur, each exactly twice, and no constant factor occurs.
Moreover, both correlators occur either with opposite signs or with the same sign twice.

In particular, we have 
\begin{equation}\label{eq:SM-remained-outputs}
    \overline C_{Y,X},\overline C_{Y,Y}\in\{\pm C_{\vec W^{(X)}},\pm C_{\vec W^{(Y)}}\}.
\end{equation}
For $V\in\{X,Y\}$,
note that the relation
\(C_{X,V}=A_XB_V\) holds on \(\Sep\supseteq\phi(\ext(\NSstates))\).
Therefore, by combining Eq.~\eqref{eq:SM-output-correlator} with Eq.~\eqref{eq:SM-vanishing-local-marginal}, we have 
\begin{equation}\label{eq:SM-bob-output}
    \overline B_V=\beta_V C_{\vec W^{(V)}}\qquad(\beta_V\in\{\pm1\})
\end{equation}
on $\ext(\NSstates)$. 
By the affine property, Eq.~\eqref{eq:SM-bob-output} holds on the whole state space $\NSstates$.
Since $C_{\vec W^{(X)}}\notin \mathcal V_{\vec W'}$ 
for $\vec W'\ne\vec W^{(X)}$, 
Lemma~\ref{lem:SM-marginal-support}, Eq.~\eqref{eq:SM-remained-outputs} and Eq.~\eqref{eq:SM-bob-output} yield
\begin{equation}
    \overline C_{Y,X}=\alpha_Y\beta_X C_{\vec W^{(X)}}\qquad(\alpha_Y\in\{\pm1\}).
\end{equation}
By a similar argument to the derivation of Eq.~\eqref{eq:SM-bob-output}, we have
\begin{equation}
    \overline A_Y= \alpha_Y\textbf1,\qquad \overline C_{Y,Y}=\alpha_Y\beta_Y C_{\vec W^{(Y)}}.
\end{equation}

This is precisely the dynamical transformation given in
Eq.~\eqref{eq:SEP-PSP-A} in the main text.

The case in which distinct input correlators arise from different choices of Alice's measurement setting is treated analogously and yields the dynamical transformation of Eq.~\eqref{eq:SEP-PSP-B} in the main text.
\end{proof}

Entangling and disentangling PSP maps have been characterized above. For the remaining case, namely PSP maps that preserve the distinction between separable and entangled pure states, the following holds.

\begin{prop}[Completeness of entanglement-preserving PSP maps]\label{prop:reversibility}
A PSP map \(\phi\) on \(\NSstates\) is reversible if
\(\phi(\Sep)\subseteq\Sep\) and
\(\phi(\Ent)\subseteq\Ent\).
\end{prop}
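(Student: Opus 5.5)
The plan is to show that $\phi$ permutes the $24$ vertices of $\NSstates$. Then $\phi(\NSstates)=\operatorname{conv}(\phi(\ext\NSstates))=\NSstates$. Since $\NSstates$ is full-dimensional in its affine hull, an affine map sending it onto itself is an affine bijection, and its inverse is again an affine map $\NSstates\to\NSstates$, so $\phi$ is reversible. Everything therefore reduces to proving that $\phi$ is injective on $\ext(\NSstates)=\Sep\cup\Ent$. I would do this in the Heisenberg picture, using Lemma~\ref{lem:heisenberg-correlator}, Lemma~\ref{lem:SM-marginal-support} and Lemma~\ref{lem:possible_local_marginals}.

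\emph{Step 1 (output correlators are a signed permutation).} By Lemma~\ref{lem:heisenberg-correlator}, each $\overline C_{\vec Z}$ is $\pm\mathbf1$ or $\pm C_{\vec W}$. Hence $P\coloneqq\prod_{\vec Z}\overline C_{\vec Z}$, restricted to pure states, equals a signed monomial $\pm\prod_{\vec W\in S}C_{\vec W}$, where the exponents have been reduced mod $2$. The correlation vectors of the eight PR boxes are exactly the sign vectors in $\{\pm1\}^4$ with product $-1$. The correlation vectors of deterministic boxes, $C_{\vec Z}=n^A_{Z_A}n^B_{Z_B}$, are exactly the eight sign vectors with product $+1$. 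The hypotheses $\phi(\Ent)\subseteq\Ent$ and $\phi(\Sep)\subseteq\Sep$ give $P=-1$ on the odd half-cube and $P=+1$ on the even half-cube. Therefore $P$ coincides on all of $\{\pm1\}^4$ with the character $\prod_{\vec W}C_{\vec W}$. By uniqueness of the Fourier expansion on the cube, $S=\{X,Y\}^2$ with sign $+$. Since only four factors are available, no $\overline C_{\vec Z}$ can be constant, and the four $\overline C_{\vec Z}$ must involve four distinct input correlators. Thus $\overline C_{\vec Z}=\epsilon_{\vec Z}C_{\pi(\vec Z)}$ for a permutation $\pi$ and signs $\epsilon_{\vec Z}$.

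\emph{Step 2 (local marginals).} Fix $U$. Applying Lemma~\ref{lem:SM-marginal-support} to $\overline C_{U,X}$ and $\overline C_{U,Y}$ gives $\overline A_U\in\mathcal V_{\pi(U,X)}\cap\mathcal V_{\pi(U,Y)}$, with $\pi(U,X)\neq\pi(U,Y)$. Lemma~\ref{lem:possible_local_marginals} then gives $\overline A_U\in\{\mathbf0,\pm\mathbf1,\pm N\}$.
\begin{itemize}
\item On a deterministic input, the output is deterministic, so $\overline A_U=\pm1$. This rules out $\mathbf0$, and also rules out $N=\mathbf0$.
\item On a PR input, the output is a PR box, so $\overline A_U=0$. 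This rules out $\pm\mathbf1$.
\end{itemize}
Hence $\overline A_U=\pm N$ with $N$ a nonzero input local marginal. The same argument shows that each $\overline B_V$ is $\pm$ an input local marginal.

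\emph{Step 3 (injectivity on vertices).} A PR box is determined by its correlation vector. By Step~1, distinct PR inputs have distinct output correlation vectors, so they have distinct outputs. Since $\phi$ preserves $\Sep$ and $\Ent$, a PR input and a deterministic input also have distinct outputs. Now take two distinct deterministic inputs $p\neq p'$. If their correlation vectors differ, their outputs differ by Step~1. Otherwise $p'$ is obtained from $p$ by the global flip $(\vec n^A,\vec n^B)\mapsto(-\vec n^A,-\vec n^B)$. This flip reverses the sign of every input local marginal, so by Step~2 it reverses $\overline A_X$, and hence $\phi(p)\neq\phi(p')$. Therefore $\phi$ is injective on the $24$ vertices and, by finiteness, bijective on them, which completes the argument outlined at the start.

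The main obstacle I anticipate is Step~1, specifically justifying the reduction to a multilinear monomial. Because $C_{\vec W}^2=1$ on pure states, the product can be reduced mod $2$ in each variable. The argument then needs that the two half-cubes together exhaust $\{\pm1\}^4$, so that the character is pinned down on the whole cube. If this reduction caused trouble, I would fall back on an explicit parity check. First exclude constant factors, since a constant factor would leave at most three correlators in the product and hence a character that is not constant on both half-cubes. Then exclude repeated correlators, since a repeat would cancel and again leave a character of degree less than four.
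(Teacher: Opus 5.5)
Your proof is correct. It reaches reversibility by a somewhat different route from the paper, mainly in the last step.

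\textbf{Step 1.} The paper uses a local version of your parity argument. If some $C_{\vec W}$ did not occur among the $\pm\overline C_{\vec Z}$, take a deterministic box and a PR box whose correlation vectors differ only in the sign of $C_{\vec W}$. They would give the same value of $\prod_{\vec Z}\overline C_{\vec Z}$, contradicting $+1$ versus $-1$. Four output slots and four input correlators then force a signed permutation. This single-flip argument avoids the mod-$2$ reduction you were worried about. Your Fourier-uniqueness version is also sound, because $C_{\vec W}^2=1$ on pure states and the pure correlation vectors fill $\{\pm1\}^4$.

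\textbf{The final step.} This is where the two proofs really differ. After obtaining $\overline A_X=\pm A_{W_A}$ or $\pm B_{W_B}$, the paper goes on to determine $\overline C_{Y,X}$, $\overline C_{Y,Y}$ and all four output marginals. It concludes that $\phi^\dagger$ permutes $\{\pm A_X,\pm A_Y,\pm B_X,\pm B_Y\}$ as well as the correlators, and then reads off that every pure state has a pure preimage (surjectivity on vertices).

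You instead prove injectivity on the $24$ vertices. That needs only that a single output marginal, say $\overline A_X$, is a signed nonzero input marginal. The reason is that the only distinct pure states not already separated by the correlator permutation are global-flip pairs of deterministic boxes.

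\textbf{What each route buys.} Your version is shorter. It never needs to match output marginals to input marginals consistently, which is the case analysis the paper compresses into ``the same arguments yield.'' The paper's version yields the explicit form of $\phi$ as a signed permutation of the eight coordinates, i.e., a relabeling of settings, outcomes and parties. That makes the link to the reversible maps of Gross \emph{et al.}\ used in the full classification directly visible.
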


\begin{proof}
Suppose that there exists
\(\vec W\in\{X,Y\}^2\)
such that
\(C_{\vec W}\neq\pm\overline C_{\vec Z}\)
for every
\(\vec Z\in\{X,Y\}^2\).

Choose a deterministic state \(p\) and a PR-box state \(p'\) whose correlation vectors differ only in the sign of \(C_{\vec W}\).
By Lemma~\ref{lem:heisenberg-correlator} in the main text,
\begin{equation}
\prod_{\vec Z\in\{X,Y\}^2}
\overline C_{\vec Z}(p)
=
\prod_{\vec Z\in\{X,Y\}^2}
\overline C_{\vec Z}(p').
\end{equation}

On the other hand, the assumptions
\(\phi(\Sep)\subseteq\Sep\) and
\(\phi(\Ent)\subseteq\Ent\)
imply that
\begin{align}
\prod_{\vec Z\in\{X,Y\}^2}
\overline C_{\vec Z}(p)
&=
\prod_{\vec Z\in\{X,Y\}^2}
C_{\vec Z}(\phi(p))
=
1,
\\
\prod_{\vec Z\in\{X,Y\}^2}
\overline C_{\vec Z}(p')
&=
\prod_{\vec Z\in\{X,Y\}^2}
C_{\vec Z}(\phi(p'))
=
-1.
\end{align}
This is a contradiction. Therefore,
\begin{equation}\label{eq:relabeling-correlator}
\{\pm\overline C_{\vec Z}\}_{\vec Z\in\{X,Y\}^2}
=
\{\pm C_{\vec W}\}_{\vec W\in\{X,Y\}^2}.
\end{equation}

Hence we may write
\begin{equation}
\overline C_{X,X}=\pm C_{\vec  W},
\qquad
\overline C_{X,Y}=\pm C_{\vec W'},
\qquad
(\vec W\neq\vec W').
\end{equation}
Lemma~\ref{lem:SM-marginal-support} then implies
\begin{equation}
\overline A_X
\in
\mathcal V_{\vec W}\cap\mathcal V_{\vec W'}.
\end{equation}

Among the possibilities listed in
Lemma~\ref{lem:possible_local_marginals},
the only ones consistent with the assumptions of \(\phi(\Sep)\subseteq \Sep\) and \(\phi(\Ent)\subseteq \Ent\) are
\begin{equation}
\overline A_X=\pm M,
\end{equation}
where 
\(M=A_{W_A}\) if $W_A=W'_A$ and 
\(M=B_{W_B}\) if $W_B=W'_B$.
We note that the case in which neither $W_A=W'_A$ nor $W_B=W'_B$ holds is excluded.

For the case of $W_A=W'_A$, 
we have \(W'_B=\check{W}_B(\neq W_B)\) and 
the same arguments for $\overline C_{Y,X}$ and $\overline C_{Y,Y}$ yield 
\begin{equation}
    \overline C_{Y,X}=\pm C_{\check{W}_A,W_B},
    \qquad
    \overline C_{Y,Y}=\pm C_{\check{W}_A,\check{W}_B},
\end{equation}
and 
\begin{equation}
    \overline A_Y=\pm A_{\check{W}_A},
    \qquad
    \overline B_X=\pm B_{W_B},
    \qquad 
    \overline B_Y=\pm B_{\check{W}_B}.
\end{equation}
Therefore, we have
\begin{equation}
\{\pm\overline A_X,\pm\overline A_Y,
\pm\overline B_X,\pm\overline B_Y\}
=
\{\pm A_X,\pm A_Y,\pm B_X,\pm B_Y\}.
\end{equation}
The same reasoning applies to the case of $W_B=W'_B$.
Therefore, for every pure state \(p\in\mathrm{ext}(\NSstates)\), there exists \(p'\in\mathrm{ext}(\NSstates)\) such that \(\phi(p')=p\).
Hence \(\phi(\NSstates)=\NSstates\), and this implies the reversibility of \(\phi\).
\end{proof}

Since every PSP map that is neither entangling nor disentangling preserves the distinction between separable and entangled pure states,
Theorem~\ref{thm:completeness} in the main text, Proposition~\ref{prop:EDcompleteness}, and Proposition~\ref{prop:reversibility} together imply the following.

\begin{thm}[Completeness of all PSP maps]\label{thm:PSPcompleteness}
Every PSP map on \(\NSstates\) is one of the \(3032\) maps presented in End Matter.
\end{thm}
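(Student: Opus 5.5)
The plan is a trichotomy on the behaviour of a PSP map \(\phi\) on pure states. Let \(\phi\) be PSP; Lemma~\ref{lem:heisenberg-correlator} applies. Exactly one of the following holds: (i) \(\phi(\Sep)\cap\Ent\neq\emptyset\) (entangling); (ii) \(\phi(\Ent)\cap\Sep\neq\emptyset\) (disentangling); (iii) neither. The point to check is that (iii) is literally the hypothesis of Proposition~\ref{prop:reversibility}. Since \(\phi\) maps pure states to pure states and \(\ext(\NSstates)=\Sep\cup\Ent\) is a disjoint union, "neither (i) nor (ii)" is equivalent to \(\phi(\Sep)\subseteq\Sep\) and \(\phi(\Ent)\subseteq\Ent\). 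Cases (i) and (ii) may overlap; the proof tolerates this, since membership in the union of the lists is all that is required.

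\textbf{Case (i).} If \(\phi\) is measure-and-prepare, the remark after the proof of Lemma~\ref{lem:heisenberg-correlator} forces it to be of the form in Eq.~\eqref{eq:M-P}. That formula covers the constant maps (\(s_+=s_-\)) and the two-outcome maps (\(s_+\neq s_-\)). Otherwise, Theorem~\ref{thm:completeness} places \(\phi\) among the 288 maps of Theorem~\ref{thm:characterization}.

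\textbf{Case (ii).} The same measure-and-prepare dichotomy applies. In the non-measure-and-prepare subcase, Proposition~\ref{prop:EDcompleteness} gives the 384 maps of Eqs.~\eqref{eq:SEP-PSP-A} and~\eqref{eq:SEP-PSP-B}.

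\textbf{Case (iii).} Proposition~\ref{prop:reversibility} shows \(\phi\) is reversible. By the classification of Gross \emph{et al.}~\cite{Gross2010}, reversible maps on \(\NSstates\) are exactly the 128 relabelings of inputs, outputs, and parties.

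\textbf{Counting.} The remaining task is to confirm that the families are pairwise disjoint and correctly counted, so that the total is \(128+2208+24+288+384=3032\).
\begin{itemize}
\item Relabelings are bijective on the 24 pure states. Constant and two-outcome maps have image of size at most two, and both the 288 and the 384 families are non-measure-and-prepare. Hence no relabeling lies in any other family.
\item The 288 and 384 families are disjoint from each other, since the former sends every pure input to \(\Ent\) and the latter to \(\Sep\).
\item Within the two-outcome family, the count \(4\cdot24\cdot23\) assumes distinct \((\vec Z,s_+,s_-)\) give distinct maps. This holds because \(C_{\vec Z}\) is recovered as the nonconstant distinguishing observable, and the ordered pair \((s_+,s_-)\) is then determined.
\item Within the constructed families, distinct parameters yield distinct Heisenberg correlator patterns \(\overline C_{\vec Z}\), hence distinct maps.
\end{itemize}

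I expect this disjointness and injectivity bookkeeping to be the main obstacle. The structural part reduces entirely to invoking the three completeness results.
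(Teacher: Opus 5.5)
Your proposal is correct and follows the paper's own argument. The paper splits into entangling, disentangling and neither; it handles the measure-and-prepare subcases by the remark after the proof of Lemma~\ref{lem:heisenberg-correlator}, the non-measure-and-prepare subcases by Theorem~\ref{thm:completeness} and Proposition~\ref{prop:EDcompleteness}, and the remaining case by Proposition~\ref{prop:reversibility} together with the relabeling classification of Gross \emph{et al.}~\cite{Gross2010}. The disjointness and injectivity bookkeeping you add is left implicit in the paper and is sound, apart from two small slips: cases (i) and (ii) can overlap, so ``exactly one'' should read ``at least one''; and relabelings are separated from the 288 and 384 families because relabelings preserve both $\Sep$ and $\Ent$, not because those families are non-measure-and-prepare.
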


\end{document}